\newtheorem{proposition}{Proposition}
\newtheorem{lemma}{Lemma}
\newtheorem{remark}{Remark}
\begin{document}
\title{Secrecy Analysis of Random MIMO Wireless Networks over $\alpha$-$\mu$ Fading Channels}

\author{Long Kong, \IEEEmembership{Student Member, IEEE}, Satyanarayana Vuppala, \IEEEmembership{Member, IEEE}, and  Georges Kaddoum, \IEEEmembership{Member, IEEE}
\vspace{-2ex}
\thanks{L. Kong and G. Kaddoum are with the Department of Electrical Engineering, École de technologie supérieure (ÉTS), Université du Québec, Montreal (Québec) H3C 1K3, Canada. E-mail: (long.kong.1@ens.etsmtl.ca, georges.kaddoum@etsmtl.ca.)}
\thanks{S. Vuppala is with United Technologies Research Center, Cork, Ireland. Email: vuppalsa@utrc.utc.com}}
%\markboth{Submitted to IEEE Transactions on  Vehicular Technology}%
%{Submitted paper}

%\markboth{Submitted to the IEEE TRANSACTIONS ON  Vehicular Technology}%
%{Long Kong...}
\maketitle

\begin{abstract}
In this paper, we investigate the secrecy performance of stochastic MIMO wireless networks over small-scale $\alpha$-$\mu$ fading channels, where both the legitimate receivers and eavesdroppers are distributed with two independent homogeneous Poisson point processes (HPPPs). Specifically, accounting for the presence of non-colluding eavesdroppers, secrecy performance metrics, including the connection outage probability (COP), the probability of non-zero secrecy capacity (PNZ) and ergodic secrecy capacity, are derived regarding the $k$-th nearest/best user cases. The index for the $k$-th nearest user is extracted from the ordering, in terms of the distances between transmitters and receivers, whereas that for the $k$-th best user is based on the combined effects of path-loss and small-scale fading. In particular, the probability density functions (PDFs) and cumulative distribution functions (CDFs) of the composite channel gain, for the $k$-th nearest and best user, are characterized, respectively. Benefiting from these results, closed-form representations of the COP, PNZ and ergodic secrecy capacity are subsequently obtained. Furthermore, a limit on the maximal number of the best-ordered users is also computed, for a given secrecy outage constraint. Finally, numerical results are provided to verify the correctness of our derivations. Additionally, the effects of fading parameters, path-loss exponent, and density ratios are also analyzed. 
\end{abstract}

\begin{IEEEkeywords}
Physical layer security, Poisson point process, $\alpha$-$\mu$ fading, random MIMO wireless networks, $k$-th legitimate user.
\end{IEEEkeywords}

\IEEEpeerreviewmaketitle
\section{Introduction}
\IEEEPARstart{T}{he} security issue impacting the wireless networks has recently attracted significant attention from the  academic and industrial communities. In this vein, the development of conventional approaches, based on cryptography techniques, faces new challenges, especially in large-scale wireless network, due to its high power consumption and complexity requirements. Alternatively, physical layer security (PLS) appears as an appealing strategy to address such a concern by conversely exploiting the inherent randomness and noisy characteristics of radio channels in order to protect confidential messages from being wiretapped. 
\subsection{Background and Related Works}
The fundamental of the PLS was initially built on the discovery of `\textit{perfect secrecy}' by Shannon \cite{Shan49} and the conceptual finding of degraded `wiretap channel', for the discrete memoryless channel, by Wyner \cite{Wyner75}. Later on, successive efforts were devoted to the generalization of the results in \cite{Wyner75} to additive Gaussian noise channels \cite{Leung78}, broadcast channels \cite{Csiszar78}, fading channels \cite{4529264,4626059,7094262,7374839,7856980,8424118}, multiple-input multiple-output (MIMO) communications \cite{6549315,7024178,7881216}, cooperative networks \cite{7299681}, cellular networks \cite{7931549,8119548} among other topics. 

A common shortage of the aforesaid works \cite{Leung78,Csiszar78,4529264,4626059,7094262,7374839,7856980,7881216,7299681,8424118}, based on the point-to-point communication links, lies in the uncertainty of users' spatial locations. Strictly speaking, users' spatial locations undoubtedly play a crucial role when investigating the large scale fading in random networks. The pioneer works, led by Haenggi \cite{4675736,4595044}, and where users distributed randomly based on stochastic geometry, was modeled as the Poisson
point process (PPP). Specifically, it is worthy to mention that the concept of `secrecy graph' was firstly proposed to study the secrecy connectivity metric, and subsequently the maximum secrecy rate \cite{5995291} when colluding eavesdroppers are considered.    

More recently, in \cite{6754116,6823670,6840982,7136149,7742344}, the authors considered the two-dimensional random wireless network under Rayleigh, composite fading and Nakagami-\textit{m} fading channels, where both the legitimate receivers and eavesdroppers are drawn from two independent homogeneous PPPs (HPPPs). Authors in \cite{6754116} studied the secure MIMO transmission subjected to Rayleigh fading. Zheng \textit{et al.} in \cite{6840982} analyzed the transmission secrecy outage probability for multiple-input and single-output (MISO) systems, and proposed the concept of `\textit{security region} (SR)', which is a geometry region, defined as the legitimate receiver's locations having a certain guaranteed level of secrecy. Differently, Satyanarayana \textit{et al.} proposed another SR \footnotemark[1]\footnotetext[1]{Within the security region, all users can achieve high secrecy gains.} \cite{7510876,7942069}, which is defined as the region where the set of ordered nodes can safely communicate with typical destination, for a given secrecy outage constraint.

Motivated by those references, it is thus of tremendous significance to study how many legitimate users are located within the coverage of the transmitter (i.e., base station), in the presence of unknown number of eavesdroppers. Most of the existing work can be summarized in terms of the ordering policy, namely the $k$-th legitimate user, either based on the distances between transmitters and users, or the instantaneous received composite channel gain. Moving in this direction, it is reported that limited studies are seen on the secrecy assessment of the $k$-th legitimate receiver. Specifically, the result disclosed in \cite{6823670} is merely restricted to the mathematical treatment of the secrecy outage probability of the $k$-th nearest receiver (i.e., the index is from the ordering based on the distance between the source and the destination). In contrast, the results unveiled in \cite{7136149,7742344,7510876,7942069} are constrained to the $k$-th best receiver\footnotemark[2]\footnotetext[2]{It is worth mentioning that the $k$-th best user is the one with the $k$-th maximal received signal out of $K$ users.} (i.e., the index is according to the array of the composite channel gains). It is reported that \cite{7888950} investigated the secrecy issue over Rayleigh fading channels, while considering both ordering policies without offering any SR. On the other hand, the introduced $k$-th nearest or best receiver is applicable to vehicular networks. The $k$-th best user can be considered as any potential vehicle receiving the \emph{$k$-th maximum path gain} from a source vehicle. One can construct the security region by selecting all the best nodes instead of random users. Selecting the best users to coordinate among each other can further improve the security of the network.

Outstandingly, the aforesaid studies merely focus on the secrecy analysis, influenced by the colluding/non-colluding eavesdroppers but have not taken the more general fading model, namely, $\alpha$-$\mu$ fading channel, into consideration. The $\alpha$-$\mu$ distribution was first proposed by Yacoub in \cite{4067122} to model the small scale variation of fading signal under line-of-sight conditions \cite{5200462,7243294}. It is physically described with two key fading parameters, i.e., non-linearity of the propagation medium $\alpha$ and the clustering of the multipath waves $\mu$. The advantage of these two factors is regarded as a useful tool to vividly depict the inhomogeneous surroundings compared with other existing fading models, such as Rayleigh, Nakagami-$m$. Most of them are based on the unrealistic assumption of homogeneous (scattering) environment. Fortunately, the $\alpha$-$\mu$ fading model was later on found valid and feasible in many realistic situations \cite{5425619,5419050,5508963,5728940,5782205,6203563,6516883}, including the vehicle-to-vehicle (V2V) communication networks and wireless body area networks (WBAN). In addition, the $\alpha$-$\mu$ distribution is flexible and mathematical tractable, since it can be extended to Rayleigh, Nakagami-\textit{m} and Weibull fading by simply attributing the fading parameters $\alpha$ and $\mu$ to selected values. For example, choosing $\alpha = 2$ and $\mu =1$ will reduce it to Rayleigh fading, while choosing $\alpha=2$ and $\mu=m$ will make it correspond to Nakagami-$m$ fading. 

To the best knowledge of the authors, in \cite{7094262,7374839,8424118}, the authors derived the probability of non-zero secrecy capacity and secrecy outage probability of point-to-point communication over $\alpha$-$\mu$ fading channels. Lei \textit{et al.} \cite{7856980} later on studied the average secrecy capacity of $\alpha$-$\mu$ wiretap fading channels. The importance of evaluating the aforementioned two metrics is based on the behavior of the eavesdroppers. If they are active, meaning that it is possible to have their channel state information (CSI) at the transmitter, the probability of non-zero secrecy capacity and the secrecy outage probability are crucial. If they are passive, average secrecy capacity is therefore a key benchmark. With respect to the random single-input and single-output (SISO) wireless networks, the authors in \cite{7510876,7942069} and \cite{7247765} correspondingly investigated the secrecy outage probability and the ergodic secrecy capacity in terms of the $k$-th best user, respectively. Apart from the literature \cite{7094262,7374839,8424118,7856980,7510876,7942069,7247765,8354927,ICCWORKSHOPlong2018}, efforts to explore the secrecy evaluation of random MIMO wireless networks over $\alpha$-$\mu$ fading channels are rarely witnessed.
\subsection{Contribution and Organization}
Consequently, the essence of this paper is the exploration of the $k$-th legitimate user's secrecy performance over $\alpha$-$\mu$ fading channel in typical random wireless networks. 

In this paper, we consider a stochastic MIMO wireless system, in the presence of two types of receivers, namely, legitimate users and eavesdroppers. They are assumed to be distributed with two independent HPPPs. The conventional space-time transmission (STT) is considered \cite{7024178}. All receivers have access to perfect channel state information (CSI), which are all subjected to quasi-static $\alpha$-$\mu$ fading. Since Wyner had concluded that perfect secrecy can be assured only if legitimate links have higher transmission rate, compared to wiretap links, the pursuit of outage-based secrecy performance analysis is considered reasonable and feasible when a fixed data transmission scheme is adopted for such quasi-static fading channels, as indicated in \cite{7464352,7742344}. In \cite{7247765}, the secure connection probability of the $k$-th legitimate receiver to the transmitter was studied, as well as the ergodic secrecy capacity. 

To this end, the connection outage probability (COP), the probability of non-zero secrecy capacity (PNZ) and the ergodic secrecy capacity, in terms of the $k$-th nearest and best legitimate receivers, are taken into consideration.

Since the concept of the $k$-th best user can be regarded as a security region, it is crucial to identify the $k^*$ best users out of $K$ users that can communicate securely with the transmitter in such region. In this work, we identify a zone (i.e., a limited number of legitimate users) comprising of such ordered $k^*$ best users, for a given secrecy constraint. 

The contributions of this paper are multifold, which can be pointed out as: 
\begin{itemize}
\item The probability of density functions (PDFs) and cumulative distribution function (CDFs) of the composite channel gain for the $k$-th nearest and best user are derived, respectively. This is essentially important for formulating the secrecy metrics, including the connection outage probability, the probability of non-zero secrecy capacity and ergodic secrecy capacity.
\item Unlike the model studied in \cite{7856980}, which considered the point-to-point system and a single eavesdropper, we study the secrecy capacity of random networks with multiple legitimate receivers and eavesdroppers. The exact closed form expressions of the COP, PNZ and ergodic secrecy capacity of the $k$-th legitimate user are derived.
\item Motivated by the PNZ of the $k$-th best receiver, a limit on the maximal number of the best-ordered receivers is calculated thereafter respecting a given secrecy outage probability. In other words, this limit eventually provides a security region concept, henceforth, all the system parameters are looked upon, based on this concept, giving a better insight into the secrecy capacity regions of random wireless networks. 
\item The accuracy of our derivations are successfully validated by Monte-Carlo simulation. Numerical outcomes are presented to indicate the effect of the path-loss exponent, densities of the users and fading parameters.  
\end{itemize}
The insights obtained from the outcomes of this paper, regarding the crucial parameters of the secrecy performance, inspire researchers and vehicle wireless communication engineers to quickly evaluate system performance and optimize available parameters when confronting various security risks.

The rest of this paper is organized as follows: system model and problem formulation are depicted in Sections \ref{Sec_systemmodel} and \ref{Sec_ProblemFormulation}, respectively. The COP, PNZ and the ergodic secrecy capacity are derived in Section \ref{Sec_noncoll}. Numerical results and discussions are then presented in Section \ref{SimulationResults} and followed by Section \ref{Concluding} with concluding remarks. Notations and symbols used in this paper are shown in Table. \ref{T1}. 
%================table I--------------------------------
\begin{table}[!t]
\newcommand{\tabincell}[2]{\begin{tabular}{@{}#1@{}}#2\end{tabular}}
  \begin{center}
  \renewcommand{\arraystretch}{1.3}
    \caption{Notations and symbols}
    \label{T1}
    \begin{tabular}{ll}
      \toprule
       Notations & Description \\
      \midrule
      $[x]^+$& $[x]^+=\max(x,0)$\\
      $\mathbb{N}$& positive integer \\
      $\mathbb{E}$& expectation operator\\
      i.i.d & identical independent distributed\\
      $R_t$ & transmission rate  \\
      $d$ & dimensions of the network \\
      $r$&distance from the origin to the receiver \\
      $\upsilon$ & path-loss exponent \\
      $f_X$ & PDF  of $X$  \\
      $F_X$ & CDF of $X$\\
      $c_d$ & $\pi^{d/2}/\Gamma(1+d/2)$\\
      $\delta$ & $d/\upsilon$\\
      $\Psi$ & path-loss process before fading \\
      $\Xi_k$ & path-loss process with fading for legitimate users\\
      $\Xi_e$ & path-loss process with fading for eavesdroppers\\
      $\lambda_b$ & density for legitimate receivers\\
      $\lambda_e$ & density for eavesdroppers\\
      $\Gamma(a)$& \tabincell{l}{$\Gamma(a)=\int_0^\infty t^{a-1}e^{-t}dt$\\Gamma function \cite[eq. (8.310.1)]{gradshteyn2014table}}\\
      $\gamma(a,x)$& \tabincell{l}{$\gamma(a,x)=\int_0^xt^{a-1}e^{-t}dt $\\lower incomplete gamma function \cite[eq. (8.350.1)]{gradshteyn2014table}}\\
      $\Gamma(a,x)$& \tabincell{l}{$\Gamma(a,x)=\int_x^\infty t^{a-1}e^{-t}dt$\\ upper incomplete gamma function\cite[eq. (8.350.2)]{gradshteyn2014table}}\\
      $H_{m,n}^{p,q}[\cdot]$ & Fox's $H$-function \cite[eq. (1.1.1)]{mathai1978h} \\
%      $G_{m,n}^{p,q}[\cdot]$ & Meijer's $G$-function \cite[eq. (1.1.1)]{mathai1978h} \\
      \bottomrule
    \end{tabular}
  \end{center}
  \vspace{-0.3cm}
\end{table}  
\section{System Model} \label{Sec_systemmodel}
In this paper, a random wireless network, displayed in Fig. \ref{SysModel} in an unbound Euclidean space of dimension $d$ is under consideration. The typical transmitter is located at the origin, who has $N_a$ ($N_a \ge 1$) antennas, and two types of receivers, namely the legitimate receivers and eavesdroppers with $N_b$ ($N_b \ge 1$), $N_e$ ($N_e \ge 1$) antennas, respectively. The locations of these receivers are drawn from two independent HPPPs. Their location sets are separately denoted by $\Phi_b(\lambda_b)$ and $\Phi_e(\lambda_e)$ with corresponding densities $\lambda_b$ and $\lambda_e$ \cite{6754116,7136149}. In such a network configuration, it is assumed that the communication links undergo a path-loss characterized by the exponent $\upsilon$ and $\alpha $-$\mu $ fading. 
\begin{figure}[!t]
\centering{\includegraphics[width=3in]{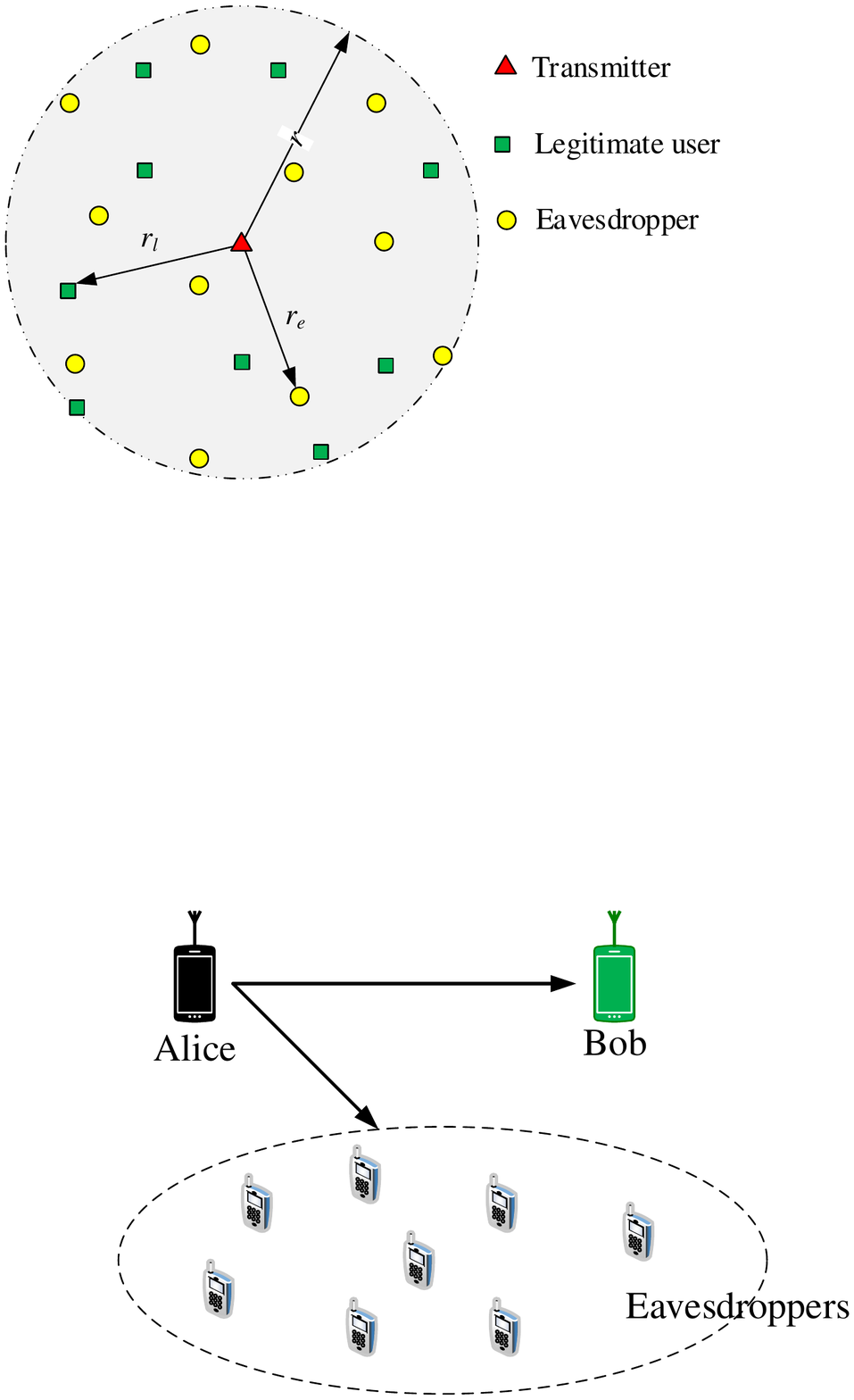}}
\caption{ \small{A 2-dimensional stochastic MIMO wireless network with independently HPPP distributed legitimate receivers and eavesdroppers.} }
\label{SysModel}
\end{figure}

Consider a transmitter that intends to send private messages to a legitimate user in the presence of eavesdroppers located at some unknown distances $r_e$. In such a stochastic MIMO wireless system, the conventional STT scheme is considered at the transmitter and receivers \cite{7024178}, then the instantaneous received signal-to-noise ratios (SNRs) at a legitimate user, $\gamma_b$, and an eavesdropper, $\gamma_e$, would be expressed as \cite[eq.(1)]{7024178}
\begin{subequations}
\begin{equation}
\gamma_b = \frac{P \sum \limits_{n_a=1}^{N_a} \sum \limits_{n_k=1}^{N_b} g_{n_a,n_k} }{r_{l}^{\upsilon}\sigma_k^2}=\eta_k\frac{g_k}{r_{l}^{\upsilon}},
\end{equation}
\begin{equation}
\gamma_e = \frac{P\sum \limits_{n_a=1}^{N_a} \sum \limits_{n_e=1}^{N_e }g_{n_a,n_e}}{r_e^{\upsilon}\sigma_e^2}=\eta_e\frac{g_e}{r_e^{\upsilon}} ,
\end{equation}
\end{subequations} 
where $\eta_i = \frac{P}{\sigma_i^2}$, $g_{n_a,n_i}=\vert h_{n_a,n_i} \vert^{2}, i \in \lbrace k,e\rbrace $, denote the instantaneous channel power gain with unit mean. $P$ denotes the transmission power and the terms $\sigma_i$ denote the noise power at the legitimate and eavesdropping receivers, respectively. So herein, $r_{l}$ and $h_{n_a,n_k}$ are the distance and fading envelope from the transmitter to the $k$-th legitimate receiver, respectively. Similarly, $r_e$ and $h_{n_a,n_e}$ are the distance and fading envelope from the transmitter to the eavesdropper, respectively. Here, $h_{n_a,n_i}$ are modeled by $\alpha$-$\mu$ fading with an arbitrary fading parameter $\alpha_i >0$ and an inverse normalized variance of $h_i^{\alpha_i}$ denoted as $\mu_i$. 

Since STT scheme is used, $g_i$ is obviously the sum of all the receivers' channel gain. Recalling the results obtained in \cite{Costa_TWC_08}, the exact PDF and CDF of $g_i$ are too complex due to the convolution of $M$ PDFs of each eavesdropper's channel gain when developing the secrecy performance. Thanks to the highly tight approximation method provided therein, it is deduced therein that the PDF of $g_i$ is given as the following form with parameters ($\alpha_{i}$, $\mu_{i}$, $\Omega_i$)\footnotemark[3]\footnotetext[3]{The method of obtaining all these three parameters is suggested to refer to \cite{Costa_TWC_08}.}, 
\begin{equation} \label{PDFgi}
\begin{split}
f_{g_i}(x) & \approx \frac{{\alpha _i x^{\frac{{\alpha _i \mu _i }}{2} - 1} }}{{2\Omega _i^{\frac{{\alpha _i \mu _i }}{2}} \Gamma \left( {\mu _i } \right)}}\exp \Bigg( { - \left( {\frac{x}{{\Omega _i }}} \right)^{\frac{{\alpha _i }}{2}} } \Bigg) \\
& = \epsilon_i H_{0,1}^{1,0} \left[ { \theta_i x \left| {\begin{array}{*{20}c}
    {-}  \\
   {(\mu_i - \frac{2}{\alpha_i}, \frac{2}{\alpha_i })}  \\
\end{array}} \right.} \right],
\end{split}
\end{equation} 
where $\Omega _i =\frac{\Gamma(\mu_i)}{\Gamma\left(\mu_i+\frac{2}{\alpha_i}\right)} $, $\epsilon_i =\frac{1}{\Omega _i \Gamma(\mu_i)} $, and $\theta_i= \frac{1}{\Omega _i}$. After integrating (\ref{PDFgi}), the CDF of $ g_i $ is given by
\begin{equation} \label{CDFgi}
\begin{split}
F_{g_i}(x)& = \frac{{ \gamma\left(\mu_i, \left( \frac{x}{\Omega_i}\right)^{\frac{\alpha_i}{2}}\right) }}{\Gamma \left( {\mu _i } \right)} \\
& = 1 - \frac{\epsilon_i}{\theta_i} H_{1,2}^{2,0} \left[ { \theta_i x \left| {\begin{array}{*{20}c}
    {(1,1)}  \\
   {(0,1),(\mu_i, \frac{2}{\alpha_i })}  \\
\end{array}} \right.} \right].
\end{split}
\end{equation}
%For the simplicity of notations, we set $r_{k,l}=r_k$ while omitting the subscript $l$.
%===================================================================================================================================
\vspace{-0.3cm}
\section{Problem Formulation} \label{Sec_ProblemFormulation}
\subsection{User Association}
\subsubsection{The nearest user} In this case, all the receivers are ordered according to their distance from the transmitter. Let $\{ r_k\}$ be a random set of legitimate receivers in ascending order of the distances from the receiver to the transmitter (i.e., $\vert r_1\vert < \vert r_2\vert<\vert r_3\vert< \cdots$ ). Letting $Z = \frac{g_k}{r_k^{\upsilon}}$, the PDF and CDF of the composite channel gain are respectively given in the following \textbf{Lemma}. 
\begin{lemma} \label{Lemma1}
{\rm The PDF and CDF of the composite channel gain for the $k$-th nearest legitimate user are given by (\ref{Pdf_nearest}) and (\ref{CDF_nearest}) in terms of the Fox's $H$-function\footnotemark[4]\footnotetext[4]{The numerical evaluation of Fox's $H$-function for MATLAB implementations is to use the method given in \cite[Table. II]{peppas2012simple}.}, respectively. 
\begin{subequations}
\begin{equation} \label{Pdf_nearest}
\begin{split}
&f_{\frac{g_k}{r_k^{\upsilon}}}(z)  = \frac{\epsilon_k}{A_k^{\frac{1}{\delta}} \Gamma(k)  } H_{1,1}^{1,1} \left[ { \frac{\theta_k z}{ A_k^{\frac{1}{\delta}}}  \left| {\begin{array}{*{20}c}
    {(1 - k - \frac{1}{\delta},\frac{1}{\delta})}  \\
   {(\mu_k - \frac{2}{\alpha_k }, \frac{2}{\alpha_k })}  \\
\end{array}} \right.} \right],
\end{split}
\end{equation}
%------------------------------------------------------------
\begin{equation} \label{CDF_nearest}
\begin{split}
\hspace{-1ex}F_{\frac{g_k}{r_k^{\upsilon}}}\left(z\right) =  1 - \frac{ \epsilon_k}{{\theta_k\Gamma(k)}}  H_{2,2}^{2,1} \left[ { \frac{\theta_k z}{ A_k^{\frac{1}{\delta}}}  \left| {\begin{array}{*{20}c}
   {(1-k,\frac{1}{\delta}),(1,1)}  \\
   {(0,1),(\mu_k, \frac{2}{\alpha_k})}   \\
\end{array}} \right.} \hspace{-1ex} \right],
\end{split}
\end{equation}
\end{subequations}
where $A_k = \pi \lambda_b$.}
\end{lemma}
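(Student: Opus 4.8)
The plan is to express the composite gain $Z=g_k/r_k^{\upsilon}$ as a scaled product of two independent random variables with known laws, and then obtain $f_Z$ and $F_Z$ by the Mellin-transform (fractional-moment) method, recognising the resulting product of Gamma functions as the Mellin--Barnes integrand of a Fox $H$-function. The first ingredient is the law of the distance $r_k$ to the $k$-th nearest point of $\Phi_b(\lambda_b)$: by the mapping theorem for Poisson processes, the image of $\Phi_b$ under $x\mapsto c_d\lambda_b\|x\|^{d}$ is a unit-intensity PPP on $(0,\infty)$ --- this is precisely the path-loss process $\Psi$ before fading --- so $Y:=c_d\lambda_b\,r_k^{d}=A_k r_k^{d}$ (with $A_k=\pi\lambda_b$ when $d=2$) is Erlang of shape $k$ and unit rate, $f_Y(y)=y^{k-1}e^{-y}/\Gamma(k)$. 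Since $\delta=d/\upsilon$, this yields $r_k^{\upsilon}=(Y/A_k)^{1/\delta}$, hence $Z=A_k^{1/\delta}\,g_k\,Y^{-1/\delta}$ with $g_k$ (the $\alpha$-$\mu$ gain with density (\ref{PDFgi})) and $Y$ independent.

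I would then compute $\mathcal{M}[f_Z](s)=\mathbb{E}[Z^{s-1}]=A_k^{(s-1)/\delta}\,\mathbb{E}[g_k^{s-1}]\,\mathbb{E}[Y^{-(s-1)/\delta}]$. The fractional moment of the Erlang variable is $\Gamma\big(k-(s-1)/\delta\big)/\Gamma(k)$, finite for $\Re(s)<1+k\delta$; the fractional moment of $g_k$ follows from (\ref{PDFgi}) (substitute $u=(x/\Omega_k)^{\alpha_k/2}$, or read off the Mellin transform of the $H_{0,1}^{1,0}$) as $\epsilon_k\theta_k^{-s}\,\Gamma\big(\mu_k-\tfrac{2}{\alpha_k}+\tfrac{2}{\alpha_k}s\big)$, finite for $\Re(s)>1-\alpha_k\mu_k/2$. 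Multiplying, $\mathcal{M}[f_Z](s)$ equals $\tfrac{\epsilon_k}{A_k^{1/\delta}\Gamma(k)}$ times $\big(\theta_k/A_k^{1/\delta}\big)^{-s}$ times $\Gamma\big(\mu_k-\tfrac{2}{\alpha_k}+\tfrac{2}{\alpha_k}s\big)\,\Gamma\big(k+\tfrac{1}{\delta}-\tfrac{1}{\delta}s\big)$, with no denominator Gamma factors. Matching this against the definition of $H_{p,q}^{m,n}$ in \cite{mathai1978h} --- the first Gamma being a lower ($B$-type) parameter $(\mu_k-\tfrac{2}{\alpha_k},\tfrac{2}{\alpha_k})$ and the second an upper ($A$-type) parameter $(1-k-\tfrac{1}{\delta},\tfrac{1}{\delta})$, so $m=n=p=q=1$ --- the inverse Mellin transform is exactly (\ref{Pdf_nearest}).

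For the CDF I would use $\mathcal{M}[1-F_Z](s)=\tfrac{1}{s}\,\mathcal{M}[f_Z](s+1)$ (valid on the common strip of analyticity, where the boundary terms of the integration by parts vanish), shift the two Gamma arguments by one step, and write $\tfrac{1}{s}=\Gamma(s)/\Gamma(s+1)$; this appends a lower parameter $(0,1)$ and a denominator parameter $(1,1)$, raising the orders to $H_{2,2}^{2,1}$ and, after the simplification $1/(A_k^{1/\delta}\cdot\theta_k/A_k^{1/\delta})=1/\theta_k$, producing exactly (\ref{CDF_nearest}). Equivalently, one can integrate (\ref{Pdf_nearest}) termwise using the standard primitive $\int_0^{z}H_{p,q}^{m,n}[ct]\,dt=z\,H_{p+1,q+1}^{m,n+1}[cz]$ with augmented parameter lists. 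The probabilistic content --- the Erlang law of $A_k r_k^{d}$ and the independence of $g_k$ and $r_k$ --- is immediate; the only points requiring care are tracking the strips of analyticity so that every moment integral converges and the inversion contour is admissible, and matching the Fox $H$-function parameter tuples to the chosen convention.
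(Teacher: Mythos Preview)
Your argument is correct and leads to the same formulas (\ref{Pdf_nearest})--(\ref{CDF_nearest}); the underlying idea is the same Mellin/convolution mechanism the paper uses, but the execution differs slightly. The paper conditions on $r_k^{\upsilon}$ and writes $f_Z(z)=\int_0^\infty y\,f_{g_k}(yz)\,f_{r_k^{\upsilon}}(y)\,dy$ (resp.\ $F_Z(z)=\int_0^\infty F_{g_k}(yz)\,f_{r_k^{\upsilon}}(y)\,dy$), then casts both $\exp(-A_k y^{\delta})$ and $f_{g_k}$ (resp.\ $1-F_{g_k}$) as Fox $H$-functions and applies the tabulated product integral \cite[eq.~(2.25.1.1)]{prudnikov1990integrals}. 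You instead factor $Z=A_k^{1/\delta} g_k Y^{-1/\delta}$ with $Y=A_k r_k^{d}\sim\mathrm{Erlang}(k,1)$, multiply the fractional moments of $g_k$ and $Y^{-1/\delta}$ directly, and match the resulting product of two Gamma functions against the Mellin--Barnes definition of $H_{1,1}^{1,1}$; the CDF then follows from $\mathcal{M}[1-F_Z](s)=s^{-1}\mathcal{M}[f_Z](s+1)$. Your route is more transparent (it exposes the Erlang structure and the two Gamma factors explicitly, and avoids the detour of rewriting $e^{-A_k y^{\delta}}$ as an $H$-function), while the paper's route is slightly more mechanical and plugs directly into an off-the-shelf integral table. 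Both are valid and equivalent in content.
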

\begin{proof}
See Appendix \ref{Appen_A} and \ref{Appen_B}, respectively.
\end{proof}
Similarly, the PDF and CDF for the $k$-th nearest eavesdropper can be obtained with parameters $A_e = \pi \lambda_e$.
\subsubsection{The best user}
Unlike the nearest user, the $k$-th best user describes the ordering of the recerivers according to the received SNR function of the combination of the path-loss and small-scale fading. Letting $\Xi_k=\{\xi_k \triangleq r_k^{\upsilon}/g_k, k \in \mathbb{N}\} $ be the path-loss process with small-scale fading. It is reported in \cite{4675736} that $\Xi_k$ is also a PPP with the intensity function $\lambda_{\Xi_k}$. For the $k$-th best user, we have $\vert \xi_1\vert < \vert \xi_2\vert < \vert \xi_3\vert< \cdots$, since $\xi_k$ takes the inverse shape of the composite channel gain.  
\begin{lemma} 
{\rm Given the path-loss process with the $\alpha$-$\mu$ fading, the intensity of $\Xi_k$ is given by
\begin{equation}
\lambda_{\Xi_k}=A_{b0} x^{\delta-1},
\end{equation}
where $A_{b0} = \frac{\lambda_b c_d \delta \Omega_k^\delta \Gamma\left(\mu_k + \frac{2\delta}{\alpha_k}\right)}{\Gamma(\mu_k)} $.}
\end{lemma}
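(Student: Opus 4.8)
The plan is to build $\Xi_k$ from the original spatial HPPP $\Phi_b$ in two successive Poisson-preserving operations — first mapping each point to its path-loss value, then randomly displacing that value by the associated fading gain — and to combine the mapping theorem and the marking/displacement theorem for Poisson processes, so that the only real computation left is a fractional moment of the $\alpha$-$\mu$ gain.

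First I would characterize the path-loss process before fading, $\Psi=\{\psi_i\triangleq r_i^{\upsilon}\}$. Since $\Phi_b$ is an HPPP of density $\lambda_b$ in $\R^d$, the mean number of points with $r_i^{\upsilon}\le t$ equals $\lambda_b$ times the volume of the ball of radius $t^{1/\upsilon}$, i.e.\ $\lambda_b c_d t^{d/\upsilon}=\lambda_b c_d t^{\delta}$. By the mapping theorem, $\Psi$ is a one-dimensional PPP on $\R^{+}$ with intensity $\lambda_{\Psi}(t)=\lambda_b c_d\,\delta\,t^{\delta-1}$. Next, $\Xi_k=\{\xi_i\triangleq\psi_i/g_i\}$ is obtained from $\Psi$ by attaching to each $\psi_i$ an independent mark $g_i$ with density $f_{g_k}$ given in (\ref{PDFgi}) and displacing $\psi_i\mapsto\psi_i/g_i$; the displacement/marking theorem then confirms (as already noted via \cite{4675736}) that $\Xi_k$ is a PPP, and its intensity is the image of $\lambda_{\Psi}$ averaged over the mark. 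For a fixed $g$ the scaling $t\mapsto t/g$ carries $\lambda_{\Psi}(t)$ to $g\,\lambda_{\Psi}(xg)$ (Jacobian $dt/dx=g$), so that
\begin{equation}
\lambda_{\Xi_k}(x)=\int_0^{\infty} g\,\lambda_{\Psi}(xg)\,f_{g_k}(g)\,dg
=\lambda_b c_d\,\delta\,x^{\delta-1}\int_0^{\infty} g^{\delta} f_{g_k}(g)\,dg
=\lambda_b c_d\,\delta\,x^{\delta-1}\,\mathbb{E}\!\left[g_k^{\delta}\right].
\end{equation}

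It then remains to evaluate the fractional moment $\mathbb{E}[g_k^{\delta}]$. Substituting $u=(g/\Omega_k)^{\alpha_k/2}$ in $\int_0^{\infty} g^{\delta} f_{g_k}(g)\,dg$ collapses the integral to $\frac{\Omega_k^{\delta}}{\Gamma(\mu_k)}\int_0^{\infty} u^{\mu_k+2\delta/\alpha_k-1}e^{-u}\,du=\frac{\Omega_k^{\delta}\,\Gamma(\mu_k+2\delta/\alpha_k)}{\Gamma(\mu_k)}$ (more generally $\mathbb{E}[g_k^{n}]=\Omega_k^{n}\Gamma(\mu_k+2n/\alpha_k)/\Gamma(\mu_k)$), and inserting $n=\delta$ yields $\lambda_{\Xi_k}(x)=A_{b0}\,x^{\delta-1}$ with $A_{b0}=\lambda_b c_d\,\delta\,\Omega_k^{\delta}\Gamma(\mu_k+2\delta/\alpha_k)/\Gamma(\mu_k)$, as claimed. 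The only genuinely delicate point is the bookkeeping in the displacement step — tracking the Jacobian and the order of integration when pushing $\lambda_{\Psi}$ through the random scaling — together with noting that $\mathbb{E}[g_k^{\delta}]$ is finite (it is, since $\delta>0$ and the $\alpha$-$\mu$ density decays super-polynomially); the moment integral itself is routine.
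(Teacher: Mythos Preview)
Your proposal is correct and follows essentially the same route as the paper: first the mapping theorem to get $\lambda_{\Psi}(t)=\lambda_b c_d\,\delta\,t^{\delta-1}$, then the displacement theorem for the random scaling $\psi\mapsto\psi/g$, which reduces everything to the fractional moment $\mathbb{E}[g_k^{\delta}]=\Omega_k^{\delta}\Gamma(\mu_k+2\delta/\alpha_k)/\Gamma(\mu_k)$. The only cosmetic difference is that the paper writes the displacement integral over the source variable with kernel $\rho(x,y)=\frac{x}{y^{2}}f_{g_k}(x/y)$ and then substitutes $z=x/y$, whereas you integrate directly over the mark $g$; after the change of variable these are the same integral.
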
 
\begin{proof}
See Appendix \ref{Appen_C}.
\end{proof}
Similarly, with regard to eavesdroppers, we have $\Xi_e = \{ r_e^\upsilon/g_e, e \in \mathbb{N}\}$, $\lambda_{\Xi_e} = A_{e0} y^{\delta -1}$, $A_{e0} = \frac{\lambda_e c_d \delta \Omega_e^\delta \Gamma\left(\mu_e + \frac{2\delta}{\alpha_e}\right)}{\Gamma(\mu_e)}$. 

Let $\frac{1}{\xi_k} = Z$, then the PDF and CDF of $\frac{1}{\xi_k} $ are provided in the following Lemma.
\begin{lemma}
{\rm The PDF and CDF of the composite channel gain for the $k$-th best user are 
\begin{subequations}
\begin{equation} \label{PDF_best}
f_{\frac{1}{\xi_k}}(z)=\exp\left(-A_{b1}z^{-\delta}\right)\frac{\delta(A_{b1}z^{-\delta})^k}{z^{-1}\Gamma(k)}.
\end{equation}
\begin{equation} \label{CDF_best}
F_{\frac{1}{\xi_k}}(z) = \frac{\Gamma(k,A_{b1}z^{-\delta})}{\Gamma(k)},
\end{equation}
\end{subequations}
where $A_{b1}=A_{b0}/\delta$.}
\end{lemma}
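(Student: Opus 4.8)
The plan is to exploit the one-dimensional Poisson structure of the path-loss-with-fading process $\Xi_k$ that was just established, together with the monotone change of variables $Z=1/\xi_k$, so that both $F_{1/\xi_k}$ and $f_{1/\xi_k}$ drop out of the order statistics of a PPP on $\mathbb{R}_+$.

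First I would compute the cumulative intensity (mean measure on $[0,x]$) of $\Xi_k$. Since $\Xi_k$ is a PPP with intensity $\lambda_{\Xi_k}(t)=A_{b0}\,t^{\delta-1}$,
\[
\Lambda(x)=\int_0^x A_{b0}\,t^{\delta-1}\,dt=\frac{A_{b0}}{\delta}\,x^{\delta}=A_{b1}\,x^{\delta},
\]
so the number of points of $\Xi_k$ in $[0,x]$ is Poisson with mean $A_{b1}x^{\delta}$. Next I would invoke the standard order-statistics fact for a PPP: the event $\{\xi_k>x\}$ coincides with the event that at most $k-1$ points fall in $[0,x]$, whence
\[
\Pr\!\left(\xi_k>x\right)=\sum_{j=0}^{k-1}\frac{\left(A_{b1}x^{\delta}\right)^{j}}{j!}\,e^{-A_{b1}x^{\delta}}=\frac{\Gamma\!\left(k,A_{b1}x^{\delta}\right)}{\Gamma(k)},
\]
where the last step uses the finite-sum representation of the upper incomplete gamma function valid for integer order $k$. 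Because $Z=1/\xi_k$ is a strictly decreasing bijection, $F_{1/\xi_k}(z)=\Pr(1/\xi_k\le z)=\Pr(\xi_k\ge 1/z)=\Gamma\!\left(k,A_{b1}z^{-\delta}\right)/\Gamma(k)$, which is exactly (\ref{CDF_best}); differentiating in $z$ with $\tfrac{d}{dy}\Gamma(k,y)=-y^{k-1}e^{-y}$ and the chain rule on $y=A_{b1}z^{-\delta}$ produces (\ref{PDF_best}). The eavesdropper versions follow verbatim with $A_{b1}$ replaced by $A_{e1}=A_{e0}/\delta$.

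I expect the only delicate point to be the bookkeeping around the inversion $Z=1/\xi_k$ — keeping the Jacobian and the direction of the inequality straight — so I would deliberately derive the CDF first and obtain the density by differentiating it, rather than pushing the density of $\xi_k$ through the reciprocal map, since the former route is markedly less error-prone. Everything else (the PPP void/order-statistics identity, the incomplete-gamma simplification, and the elementary power-law integral for $\Lambda$) is routine, so no genuinely hard step is anticipated.
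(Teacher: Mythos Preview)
Your proposal is correct and follows essentially the same route as the paper's Appendix~D: both compute the mean measure $\Lambda(x)=A_{b1}x^{\delta}$ of the PPP $\Xi_k$, use Poisson counting to obtain the distribution of the $k$-th smallest point $\xi_k$ in incomplete-gamma form, and then push this through the reciprocal map $Z=1/\xi_k$ before differentiating. The only cosmetic difference is that the paper first writes $F_{\xi_k}$ via the lower incomplete gamma and then invokes $\Gamma(k,y)=\Gamma(k)-\gamma(k,y)$, whereas you go straight to the survival function and the upper incomplete gamma; the two are trivially equivalent.
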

\begin{proof} 
See Appendix \ref{Appen_D}.
\end{proof}
\begin{figure}[!t]
\centering{\includegraphics[width=\columnwidth]{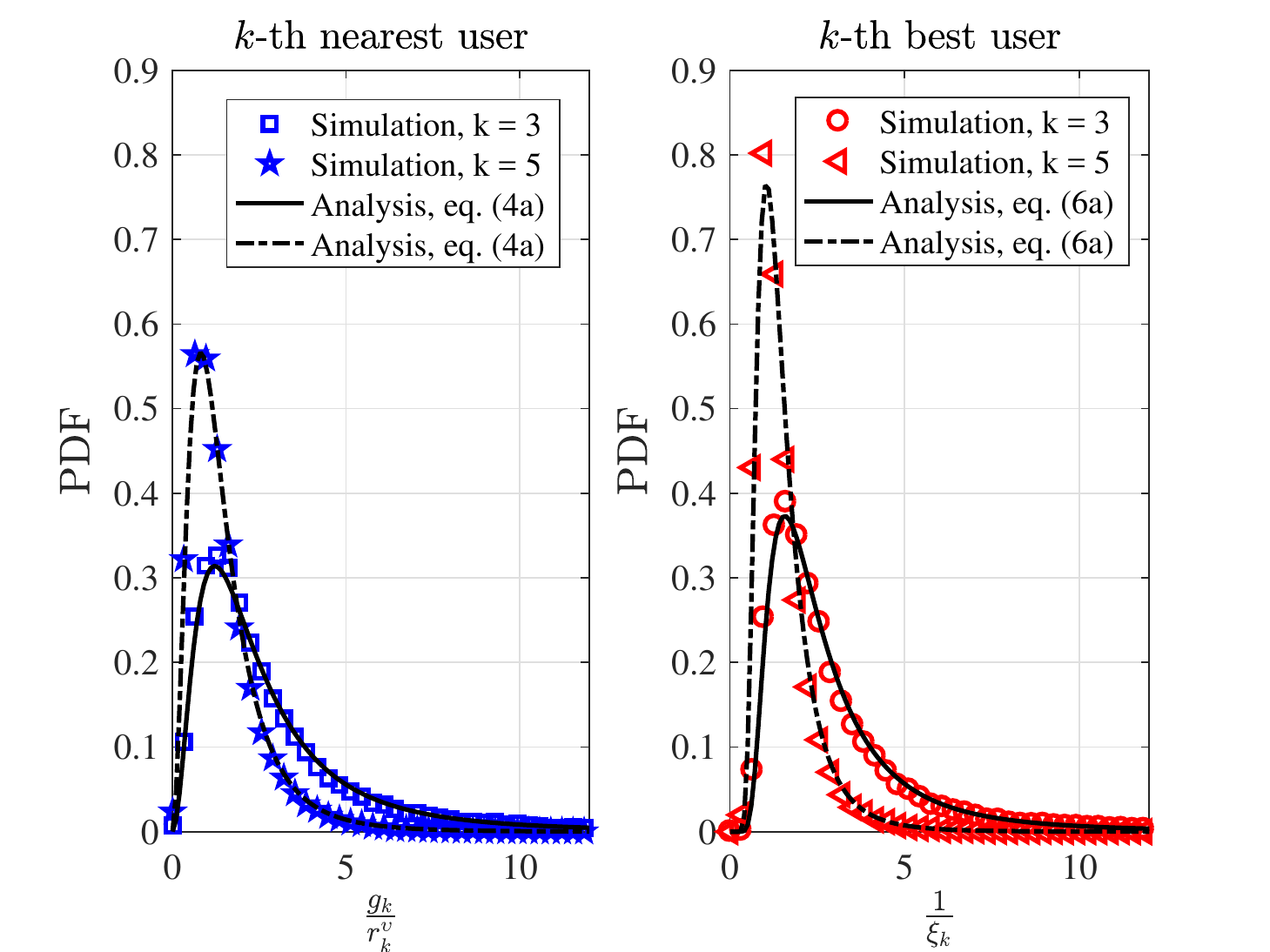}}
\caption{ \small The PDFs for the $k$-th best and nearest users when $\alpha_k = 2$, $\mu_k = 3$, $\eta_k = 0$ dB, $d = \upsilon = 2$, $\lambda_b = 2$, $N_a = N_b = 1$.}
\label{Pdf_best_near}
\end{figure}
 As shown in Fig. \ref{Pdf_best_near}, the PDFs for the $k$-th nearest and the $k$-th best legitimate user are respectively demonstrated, it is observed that our analysis are successfully validated by simulation results.
\subsection{Secrecy Metrics}
\subsubsection{Connection outage probability}
Connection outage probability is defined as the event in which the legitimate receiver cannot successfully decode the transmitted messages. This happens when the main channel capacity falls below the actual transmission rate $R_t$. It is mathematically defined as
\begin{equation}
\mathcal{P}_{co}(R_t)  =\mathcal{P}r\Bigg(\log_2\left(1+\frac{\eta_k g_k}{r_l^{\upsilon}}\right) <R_t \Bigg).
\end{equation}
\subsubsection{Probability of non-zero secrecy capacity} 
The secrecy capacity of the aforementioned system model under the assumption that eavesdroppers do not collude, is \cite{7136149}
\begin{equation} \label{Csk}
C_{s:k}= \left [ \log_2\left(1+\frac{\eta_k g_k}{r_l^{\upsilon}}\right)-\log_2\left(1+\frac{\eta_e g_e}{r_e^{\upsilon}}\right) \right]^+.
\end{equation}
When the wiretap channel capacity is less than the main channel capacity, the eavesdroppers are incapable of successfully decoding the transmitted messages. The probability of the occurrence for this event is called as the probability of non-zero secrecy capacity. Mathematically from (\ref{Csk}), the probability of non-zero secrecy capacity is defined as
\begin{equation}
\mathcal{P}_{nz}  = \mathcal{P}r \left( \frac{\eta_k g_k}{r_l^{\upsilon}} > \frac{\eta_e g_e}{r_e^{\upsilon}} \right).
\end{equation}

\subsubsection{Ergodic secrecy capacity} 
In line with \cite{5524086,6549315,6805139,7247765,7302060,LongIWCMC2018}, the ergodic secrecy capacity is obtained as follows
\begin{equation} \label{Cs_ergodic}
C_{s:k} = \left[\underbrace{\mathbb{E}\left[\log_2 \left(1 + \frac{\eta_k g_k}{r_l^\upsilon}\right)\right]}_{R_{k}^M }- \underbrace{\mathbb{E}\left[\log_2 \left(1 + \frac{\eta_e g_e}{r_e^\upsilon}\right)\right]}_{R_{k}^W} \right]^+,
\end{equation}
where $R_{k}^M$ and $R_{k}^W$ are the ergodic capacity of the transmitter to the $k$-th legitimate receiver and the $k$-th eavesdropper, respectively. 

\section{Performance Characterization} \label{Sec_noncoll}
By using the PDFs and CDFs of the composite channel gain for the $k$-th nearest/best user, we study the COP, PNZ, and ergodic secrecy capacity, respectively. 
\subsection{Performance Characterization of the COP}
\subsubsection{Connection outage probability for the $k$-th nearest receiver} 
From the definition, the COP for the $k$-th nearest legitimate receiver is mathematically expressed as 
\begin{equation} \label{pco_nearest}
\begin{split}
\mathcal{P}_{co,N}(R_t) & =\mathcal{P}r\Bigg(\log_2\left(1+\frac{\eta_k g_k}{r_k^{\upsilon}}\right) <R_t \Bigg)\\
&=\mathcal{P}r\left(\frac{g_k}{r_k^{\upsilon}} < \frac{2^{R_t}-1}{\eta_k} \right).\\
\end{split}
\end{equation}
Notably, $\mathcal{P}_{co,N}(R_t)$ can be assessed from the PDF of the $k$-th legitimate receiver's channel gain. For the ease of notations, we set $\Delta = \frac{2^{R_t}-1}{\eta_k}$.
\begin{proposition}
{\rm The COP of the $k$-th nearest legitimate receiver is given as}
\begin{equation}
\mathcal{P}_{co,N}(R_t) =  F_{\frac{g_k}{r_k^{\upsilon}}}\left(\Delta\right).
\end{equation}
\end{proposition}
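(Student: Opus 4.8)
The plan is to observe that this proposition is an immediate consequence of the definition of the connection outage probability together with Lemma~\ref{Lemma1}, so the work is essentially an algebraic rearrangement followed by an identification of a probability with a CDF value. First I would start from the defining expression in~(\ref{pco_nearest}),
\begin{equation*}
\mathcal{P}_{co,N}(R_t) = \mathcal{P}r\Bigg(\log_2\left(1+\frac{\eta_k g_k}{r_k^{\upsilon}}\right) <R_t \Bigg),
\end{equation*}
and use the strict monotonicity of $x\mapsto \log_2(1+x)$ on $[0,\infty)$ together with $\eta_k>0$ to rewrite the event as $\frac{\eta_k g_k}{r_k^{\upsilon}} < 2^{R_t}-1$, i.e. $\frac{g_k}{r_k^{\upsilon}} < \frac{2^{R_t}-1}{\eta_k} = \Delta$. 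Since $Z=\frac{g_k}{r_k^{\upsilon}}$ is a nonnegative random variable, the probability $\mathcal{P}r(Z<\Delta)$ is exactly $F_{\frac{g_k}{r_k^{\upsilon}}}(\Delta)$ (the CDF has no atom at $\Delta$ because $Z$ is absolutely continuous, as shown by the PDF in~(\ref{Pdf_nearest}), so the distinction between $<$ and $\le$ is immaterial).

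Next I would simply invoke Lemma~\ref{Lemma1}, which gives the closed-form CDF of the composite channel gain for the $k$-th nearest user in~(\ref{CDF_nearest}); substituting $z=\Delta=\frac{2^{R_t}-1}{\eta_k}$ into that expression yields the explicit Fox's $H$-function representation
\begin{equation*}
\mathcal{P}_{co,N}(R_t) = 1 - \frac{ \epsilon_k}{{\theta_k\Gamma(k)}}  H_{2,2}^{2,1} \left[ { \frac{\theta_k \Delta}{ A_k^{\frac{1}{\delta}}}  \left| {\begin{array}{*{20}c}
   {(1-k,\frac{1}{\delta}),(1,1)}  \\
   {(0,1),(\mu_k, \frac{2}{\alpha_k})}   \\
\end{array}} \right.} \right],
\end{equation*}
with $A_k=\pi\lambda_b$, which is the desired statement in closed form.

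There is no real obstacle here; the only point requiring a line of care is the justification that the inequality manipulation is an equivalence (which holds because all the quantities $\eta_k$, $g_k$, $r_k^{\upsilon}$ are strictly positive and $\log_2(1+\cdot)$ is increasing), and that the boundary value does not contribute to the probability. All of the analytical heavy lifting — deriving the PDF/CDF of $\frac{g_k}{r_k^{\upsilon}}$ from the ordered distance distribution of the HPPP and the $\alpha$-$\mu$ channel gain in~(\ref{CDFgi}) — has already been carried out in Lemma~\ref{Lemma1} (Appendices~\ref{Appen_A} and~\ref{Appen_B}), so the proposition follows directly.
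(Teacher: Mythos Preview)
Your proposal is correct and follows essentially the same approach as the paper: the paper's own proof is the single line ``Substituting~(\ref{CDF_nearest}) into~(\ref{pco_nearest}), the proof is achieved,'' and your argument is precisely this substitution, spelled out with the (valid) additional justifications about monotonicity and absolute continuity.
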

\begin{proof}
Substituting (\ref{CDF_nearest}) into (\ref{pco_nearest}), the proof is achieved.
\end{proof}
\subsubsection{Connection outage probability for the $k$-th best receiver} 
Similarly, the COP for the $k$-th best receiver is given  
\begin{equation} \label{pco_best}
\begin{split}
\mathcal{P}_{co,B}(R_t) & =\mathcal{P}r\Bigg(\log_2\left(1+\frac{\eta_k}{\xi_k }\right) <R_t \Bigg)\\
&=1-\mathcal{P}r\left(\xi_k < \frac{1}{\Delta }\right).\\
\end{split}
\end{equation}
Based on (\ref{pco_best}), it is becoming apparent that the COP for the $k$-th best receiver is termed as $F_{\xi_k}$.

\begin{proposition}
{\rm The COP of the $k$-th best legitimate receiver takes the following shape}
\begin{equation} \label{Pco_case2}
\mathcal{P}_{co,B}(R_t) = \frac{\Gamma\left( k, A_{b1} \Delta^{-\delta}\right)}{\Gamma(k)}.
\end{equation}
\end{proposition}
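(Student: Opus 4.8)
The plan is to reduce the statement to a direct substitution of the closed-form CDF of the composite channel gain for the $k$-th best user, established in the third Lemma (eq. (\ref{CDF_best})), into the connection outage event of (\ref{pco_best}). The only genuine bookkeeping is handling the reciprocal relationship between the path-loss-with-fading variable $\xi_k$ (which is arranged in ascending order) and the composite gain $Z = 1/\xi_k$, so that the survival function of $\xi_k$ is correctly identified with the CDF of $Z$.

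First I would start from (\ref{pco_best}) and make the outage event explicit: for positive $\xi_k$ and $\eta_k$, the inequality $\log_2(1 + \eta_k/\xi_k) < R_t$ is equivalent to $\eta_k/\xi_k < 2^{R_t}-1$, i.e. $\xi_k > (2^{R_t}-1)/\eta_k = 1/\Delta$, with $\Delta$ as set just above (\ref{pco_best}). Hence $\mathcal{P}_{co,B}(R_t) = \mathcal{P}r(\xi_k > 1/\Delta) = 1 - \mathcal{P}r(\xi_k < 1/\Delta)$, which is exactly the form already displayed in (\ref{pco_best}). Next I would pass to $Z = 1/\xi_k$: since both are a.s. positive and continuous, $\{\xi_k < 1/\Delta\} = \{1/\xi_k > \Delta\} = \{Z > \Delta\}$, so $\mathcal{P}r(\xi_k < 1/\Delta) = 1 - F_{1/\xi_k}(\Delta)$ and therefore $\mathcal{P}_{co,B}(R_t) = F_{1/\xi_k}(\Delta)$. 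Evaluating (\ref{CDF_best}) at $z=\Delta$ then gives $\mathcal{P}_{co,B}(R_t) = \Gamma\!\left(k, A_{b1}\Delta^{-\delta}\right)/\Gamma(k)$, i.e. (\ref{Pco_case2}).

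There is no real obstacle here: the heavy lifting has already been carried out in deriving the intensity $\lambda_{\Xi_k}$ and the closed-form $F_{1/\xi_k}$ (via the PPP mapping / void-probability argument underlying the third Lemma). The only point requiring care is the direction of the inequalities under the reciprocal map $\xi_k \mapsto 1/\xi_k$, which flips ``$<$'' into ``$>$''; getting this right is precisely what converts the tail probability of $\xi_k$ into the CDF of the composite gain, and hence produces the upper incomplete gamma form $\Gamma(k,\cdot)$ rather than the lower one $\gamma(k,\cdot)$.
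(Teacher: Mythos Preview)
Your proposal is correct and mirrors the paper's own proof, which simply states ``Substituting (\ref{CDF_best}) into (\ref{pco_best}), the proof is completed.'' You merely spell out the reciprocal bookkeeping $\mathcal{P}_{co,B}(R_t)=1-F_{\xi_k}(1/\Delta)=F_{1/\xi_k}(\Delta)$ that the paper leaves implicit (it is established in Appendix~\ref{Appen_D}), so the approaches are essentially identical.
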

\begin{proof}
Substituting (\ref{CDF_best}) into (\ref{pco_best}), the proof is completed.
\end{proof}

\subsection{Performance Characterization of the PNZ} \label{Sec_coll}
In this section, the PNZs, with respect to the $k$-th nearest and best legitimate receiver, are well investigated.

As seen from (\ref{Csk}) for the non-colluding eavesdroppers, the non-zero secrecy capacity for the $k$-th legitimate receiver is mathematically guaranteed with the probability given for the following four scenarios:
\begin{itemize}
\item[$\bullet$] case 1): the $k$-th nearest legitimate receiver in the presence of the 1st nearest eavesdropper\footnotemark[5]\footnotetext[5]{The nearest eavesdropper is the one closest to the legitimate receiver.};
\begin{equation} \label{PSPC_nearest}
\begin{split}
\hspace{-5ex}\mathcal{P}_{nz,NN}& = \mathcal{P}r\left(\frac{\eta_k g_k }{r_k^{\upsilon}}>\frac{\eta_e g_e}{r_e^{\upsilon} }\right)=\mathcal{P}r\left( \frac{g_e}{r_e^{\upsilon}}\frac{r_k^{\upsilon}}{g_k} < \frac{\eta_k}{\eta_e}\right) \\
& = \int_0^\infty F_{\frac{ g_e}{r_e^\upsilon}}(\varpi y) f_{\frac{ g_k}{r_k^\upsilon}} \left( y \right) dy .
\end{split}
\end{equation}
\item[$\bullet$] case 2): the $k$-th best legitimate receiver in the presence of the 1st best eavesdropper\footnotemark[6]\footnotetext[6]{The best eavesdropper is supposed to be the one with the smallest $\xi_e$.};
\begin{equation} \label{PSPC_best}
\begin{split}
\hspace{-2ex}\mathcal{P}_{nz,BB} &= \mathcal{P}r\left(\frac{ \eta_k}{\xi_k }>\frac{\eta_e}{\xi_e }\right)=1 - \mathcal{P}r\left( \frac{\xi_e}{\xi_k} < \frac{1}{\varpi} \right) \\
&= 1-\int_0^\infty F_{\xi_e}\left(\frac{y}{\varpi} \right) f_{\xi_k} (y) dy.
\end{split}
\end{equation}
\item[$\bullet$] case 3): the $k$-th nearest legitimate receiver in the presence of the 1st best eavesdropper;
\begin{equation} \label{Pnz_NB}
\begin{split}
\hspace{-1.5ex}\mathcal{P}_{nz,NB}& = Pr \left( \frac{\eta_k g_k}{r_k^{\upsilon}} > \frac{\eta_e}{\xi_e} \right)  = 1- Pr \left( \frac{g_k}{r_k^{\upsilon}} \xi_e < \frac{1}{\varpi}  \right)\\
& = 1 - \int_0^\infty F_{\frac{g_k}{r_k^{\upsilon}}}\left( \frac{1}{\varpi y} \right)f_{\xi_e}(y) dy.
\end{split}
\end{equation}
\item[$\bullet$] case 4): the $k$-th best legitimate receiver in the presence of the 1st nearest eavesdropper
\begin{equation}  \label{Pnz_BN}
\begin{split}
\hspace{-2ex}\mathcal{P}_{nz,BN}& = Pr \left(\frac{\eta_k}{\xi_k} > \frac{\eta_e g_e}{r_e^{\upsilon}}  \right)  = Pr \left( \frac{g_e}{r_e^{\upsilon}} \xi_k < \varpi  \right)\\
& = \int_0^\infty F_{\frac{g_e}{r_e^{\upsilon}}}\left( \frac{\varpi}{y}\right)f_{\xi_k}(y) dy.
\end{split}
\end{equation}
\end{itemize}
\subsubsection{The $k$-th nearest receiver \& the 1st nearest eavesdropper}
\begin{proposition} \label{P1}
{\rm The PNZ of the $k$-th nearest legitimate receiver in the presence of the 1st nearest eavesdropper can be calculated from }
\end{proposition}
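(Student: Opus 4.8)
The plan is to evaluate the integral in \eqref{PSPC_nearest}, namely $\mathcal{P}_{nz,NN}=\int_0^\infty F_{\frac{g_e}{r_e^\upsilon}}(\varpi y)\, f_{\frac{g_k}{r_k^\upsilon}}(y)\,dy$ with $\varpi=\eta_e/\eta_k$, by substituting the Fox $H$-function representations of the CDF in \eqref{CDF_nearest} (with the eavesdropper parameters $A_e=\pi\lambda_e$, $\alpha_e$, $\mu_e$, $\epsilon_e$, $\theta_e$) and the PDF in \eqref{Pdf_nearest} (with the legitimate-user parameters). First I would write $F_{\frac{g_e}{r_e^\upsilon}}(\varpi y)=1-\frac{\epsilon_e}{\theta_e\Gamma(k_e)}H_{2,2}^{2,1}[\cdots]$ so that the integral splits into a trivial part, $\int_0^\infty f_{\frac{g_k}{r_k^\upsilon}}(y)\,dy=1$, minus a Mellin-type integral of the product of two $H$-functions against the power weight coming from the argument $\theta_e\varpi y/A_e^{1/\delta}$.

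The key technical step is then a standard integral identity: $\int_0^\infty x^{s-1} H_{m_1,n_1}^{p_1,q_1}[ax]\,H_{m_2,n_2}^{p_2,q_2}[bx]\,dx$ evaluates to a single Fox $H$-function of two variables (an $H$-function of the Mellin–Barnes type in $a/b$ or $b/a$), which in this paper's notation would likely be expressed via the bivariate $H$-function or, after using the reduction that one of the arguments here is an $H_{1,1}^{1,1}$, collapsed into an ordinary $H_{p,q}^{m,n}$ in the single variable $\varpi^{\delta}$ (or its reciprocal). Concretely, I would invoke the Mellin transform of $f_{\frac{g_k}{r_k^\upsilon}}$ — which is itself a ratio of Gamma functions readable off \eqref{Pdf_nearest} — and recognize the resulting Mellin–Barnes contour integral of the $H_{2,2}^{2,1}$ CDF kernel times that Gamma ratio as a new Fox $H$-function with enlarged parameter lists. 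The bookkeeping on the parameter pairs $(1-k,\tfrac1\delta),(1,1)$ and $(0,1),(\mu_e,\tfrac{2}{\alpha_e})$ from the CDF, together with the shift induced by the Mellin variable of the PDF involving $(1-k-\tfrac1\delta,\tfrac1\delta)$ and $(\mu_k-\tfrac{2}{\alpha_k},\tfrac{2}{\alpha_k})$, will produce an $H_{3,3}^{2,2}$-type object whose argument is proportional to $\big(\theta_e A_k^{1/\delta}/(\theta_k A_e^{1/\delta})\big)\varpi$ raised to an appropriate power.

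The main obstacle I anticipate is purely the parameter-matching algebra: ensuring the contour of the combined Mellin–Barnes integral separates the poles of the two Gamma-function families correctly (the existence/convergence conditions for the product-of-$H$-functions integral, e.g. the $\mu_i,\alpha_i,\delta$ ranges that keep the strip non-empty), and then relabeling the top/bottom parameter rows in the canonical Fox $H$ order so the final closed form is clean. A secondary subtlety is the $\Gamma(k)$ versus $\Gamma(k_e)$ normalization — since case~1 pairs the $k$-th nearest legitimate user with the \emph{1st} nearest eavesdropper, the eavesdropper CDF should be taken at order $1$, i.e. $F_{\frac{g_e}{r_e^\upsilon}}$ with $k\!\to\!1$, which simplifies its $H$-function (the $(1-k,\tfrac1\delta)$ pair becomes $(0,\tfrac1\delta)$ and may cancel), and I would double-check that simplification before stating the result. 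Once the integral identity is applied and the parameters are ordered, the proposition follows by writing $\mathcal{P}_{nz,NN}=1-\frac{\epsilon_e}{\theta_e}\cdot(\text{that }H\text{-function})$ and substituting the explicit constants $A_k=\pi\lambda_b$, $A_e=\pi\lambda_e$, $\varpi=\eta_e/\eta_k$.
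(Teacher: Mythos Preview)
Your approach matches the paper's: substitute the $H$-function forms of \eqref{CDF_nearest} (with the eavesdropper at order $1$, so $(1-k,\tfrac1\delta)\to(0,\tfrac1\delta)$ as you note) and \eqref{Pdf_nearest} into \eqref{PSPC_nearest}, split off the unit term, and evaluate the remaining product-of-two-$H$-functions integral directly via the Mellin-convolution identity \cite[eq.~(2.25.1.1)]{prudnikov1990integrals}. Two minor bookkeeping corrections: the result is an $H_{3,3}^{3,2}$ (not $H_{3,3}^{2,2}$) with argument linear in $\varpi$, and from \eqref{PSPC_nearest} one reads $\varpi=\eta_k/\eta_e$ rather than $\eta_e/\eta_k$.
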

\begin{equation} \label{Pspc_nz1_nearest}
\begin{split}
 &\hspace{-2ex}\mathcal{P}_{nz,NN} = 1 - \frac{\epsilon_k \epsilon_e }{\theta_k \theta_e \Gamma(k)} \\
 &\hspace{-2ex}\times H_{3,3}^{3,2} \left[ { \frac{\theta_e \varpi}{\theta_k} \left( \frac{A_k}{A_e} \right)^{\frac{1}{\delta} }\left|  {\begin{array}{*{20}c}
    {(1,1),(1 - \mu_k , \frac{2}{\alpha_k }),(0,\frac{1}{\delta})}  \\
   {(0,1),(\mu_e , \frac{2}{\alpha_e}),(k,\frac{1}{\delta})}  \\
\end{array}} \right.} \right].
\end{split}
\end{equation}
\begin{proof}
See Appendix \ref{Appen_E}.
\end{proof}
\subsubsection{The $k$-th best receiver \& the 1st best eavesdropper}
%==========================================================================
\begin{proposition} \label{P2}
{\rm The PNZ of the $k$-th best legitimate receiver in the presence of the 1st best eavesdropper is given as}
\begin{equation} \label{PSPC_best_final}
\mathcal{P}_{nz,BB}=  \left(\frac{A_{b1}}{A_{b1}+A_{e1}\varpi^{-\delta}} \right)^k. 
\end{equation}
\end{proposition}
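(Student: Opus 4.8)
The plan is to evaluate the integral in~(\ref{PSPC_best}) directly, exploiting the fact that for the best-user ordering both the CDF and PDF of $\xi_k$ (and $\xi_e$) have simple closed forms inherited from the Poisson/Gamma structure established in Lemmas~2 and~3. First I would record that, since $\xi_k = 1/Z$ with $Z = 1/\xi_k$ having CDF~(\ref{CDF_best}), the random variable $\xi_k$ satisfies $\mathcal{P}r(\xi_k \le t) = 1 - F_{1/\xi_k}(1/t) = 1 - \Gamma(k, A_{b1} t^{\delta})/\Gamma(k) = \gamma(k, A_{b1} t^{\delta})/\Gamma(k)$, so that $\xi_k$ has PDF $f_{\xi_k}(t) = \frac{\delta A_{b1}^k}{\Gamma(k)} t^{\delta k - 1} \exp(-A_{b1} t^{\delta})$; likewise $F_{\xi_e}(t) = \gamma(k, A_{e1} t^{\delta})/\Gamma(k)$ with $A_{e1} = A_{e0}/\delta$. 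In other words $\xi_k^{\delta}$ and $\xi_e^{\delta}$ are (scaled) Gamma random variables with shape~$k$; this is the structural observation that makes everything collapse.

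Next I would substitute these two expressions into~(\ref{PSPC_best}) and write
\begin{equation} \nonumber
\mathcal{P}_{nz,BB} = 1 - \frac{\delta A_{b1}^k}{\Gamma(k)^2} \int_0^\infty \gamma\!\left(k, A_{e1}\varpi^{-\delta} y^{\delta}\right) y^{\delta k - 1} e^{-A_{b1} y^{\delta}}\, dy.
\end{equation}
The cleanest route from here is a change of variables $u = y^{\delta}$ (so $du = \delta y^{\delta-1} dy$), which turns the integral into $\frac{1}{\Gamma(k)^2}\int_0^\infty \gamma(k, c u)\, (A_{b1} u)^{k-1} A_{b1} e^{-A_{b1} u}\, du$ with $c = A_{e1}\varpi^{-\delta}$. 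This is precisely the expectation $\mathbb{E}[\gamma(k, c U)/\Gamma(k)]$ where $U$ is Gamma$(k, A_{b1})$; but $\gamma(k, cU)/\Gamma(k) = \mathcal{P}r(V \le cU \mid U)$ for an independent Gamma$(k,1)$ variable~$V$, i.e.\ it equals $\mathcal{P}r(V/U \le c)$. The ratio $V/U$ of two independent Gamma variables with common shape~$k$ and rates $1$ and $A_{b1}$ has a Beta-prime (F-type) distribution whose CDF at $c$ is the regularized incomplete beta function $I_{A_{b1} c/(1 + A_{b1} c)}(k,k)$ — alternatively one simply recognizes that $\mathcal{P}r\big(\xi_e^\delta/\varpi^{-\delta}\text{-weighted}\ \le \xi_k^\delta\big)$ reduces to $\mathcal{P}r(W \le 1)$ for $W$ the ratio of the two Gamma$(k)$ pieces, giving directly $\big(A_{b1}/(A_{b1} + A_{e1}\varpi^{-\delta})\big)^k$ once one uses the closed form of that ratio's distribution at the point~$1$.

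Alternatively — and this is the version I would actually write out, since it avoids invoking Beta-prime distributions — I would keep the integral representation $\gamma(k, cu) = \int_0^{cu} t^{k-1} e^{-t}\, dt$, swap the order of integration over the region $\{0 < t < c u,\ u > 0\}$, perform the inner $u$-integral $\int_{t/c}^\infty A_{b1}^k u^{k-1} e^{-A_{b1} u}\, du / \Gamma(k) = \Gamma(k, A_{b1} t/c)/\Gamma(k)$, and then recognize the remaining single integral $\frac{1}{\Gamma(k)^2}\int_0^\infty t^{k-1} e^{-t}\,\Gamma(k, A_{b1} t / c)\, dt$ as a standard tabulated integral (Gradshteyn--Ryzhik, the integral of $t^{k-1} e^{-t}\Gamma(k,\beta t)$) that evaluates to $1 - \big(1/(1+\beta)\big)^k \cdot(\text{combinatorial factor that telescopes})$; substituting $\beta = A_{b1}/c = A_{b1}\varpi^{\delta}/A_{e1}$ and simplifying yields $1 - \mathcal{P}_{nz,BB} = 1 - \big(A_{b1}/(A_{b1} + A_{e1}\varpi^{-\delta})\big)^k$, which is~(\ref{PSPC_best_final}). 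The main obstacle is purely bookkeeping: making sure the normalizing constants $A_{b1}, A_{e1}$ and the exponent shifts from the $u = y^\delta$ substitution are tracked correctly so that the $\delta$'s cancel and only the clean ratio survives; the probabilistic interpretation via the ratio of two equal-shape Gamma variables is what guarantees in advance that the answer must take the stated form, so it serves as a useful check on the algebra.
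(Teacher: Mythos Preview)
There is a genuine error in your setup: the proposition concerns the \emph{1st} best eavesdropper, so by Lemma~3 (with index~$1$) one has $F_{\xi_e}(t)=\gamma(1,A_{e1}t^{\delta})=1-\exp(-A_{e1}t^{\delta})$, i.e.\ $\xi_e^{\delta}$ is exponential with rate $A_{e1}$. You instead wrote $F_{\xi_e}(t)=\gamma(k,A_{e1}t^{\delta})/\Gamma(k)$ and based the whole argument on $\xi_e^{\delta}$ being Gamma with shape~$k$. That is not what is being computed, and it matters: if both $\xi_k^{\delta}$ and $\xi_e^{\delta}$ had shape~$k$, the probability $\mathcal{P}r(\xi_e>\xi_k/\varpi)$ would be the regularized incomplete beta $I_{p}(k,k)$ with $p=A_{b1}/(A_{b1}+A_{e1}\varpi^{-\delta})$, which equals $p^k$ only when $k=1$. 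So neither your Beta-prime argument nor the ``tabulated $\int t^{k-1}e^{-t}\Gamma(k,\beta t)\,dt$'' route can produce~(\ref{PSPC_best_final}) from those premises; the claimed simplification at the end does not hold.

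Once the eavesdropper index is corrected to~$1$, your machinery becomes unnecessary and the computation collapses to the paper's own proof: after $u=y^{\delta}$ the integral is simply $\mathbb{E}\big[e^{-A_{e1}\varpi^{-\delta}U}\big]$ for $U\sim\text{Gamma}(k,A_{b1})$, i.e.\ the Laplace transform of a Gamma variable, giving $(A_{b1}/(A_{b1}+A_{e1}\varpi^{-\delta}))^{k}$ in one line. There is no incomplete-gamma-times-incomplete-gamma integral to swap or look up.
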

\begin{proof} 
Motivated by (\ref{CDF_best_xi}), for the best eavesdropper, the CDF of $\xi_e$ is given by
\begin{equation} \label{CDF_xie}
\begin{split}
F_{\xi_e}(x) &= \gamma\left( 1, A_{e1} x^\delta\right) = 1- \exp(-A_{e1}x^\delta),
\end{split}
\end{equation}
where $A_{e1} = A_{e0}/\delta$.

After plugging (\ref{CDF_xie}) and (\ref{PDF_best_xi}) into (\ref{PSPC_best}), it yields
\begin{equation}
\begin{split}
&\hspace{-1.5ex}\mathcal{P}_{nz,BB} = 1-\mathcal{P}r\left( \frac{\xi_e}{\xi_k} <\frac{1}{ \varpi}\right)\\
&\hspace{-1.5ex}= 1- \int_0^\infty F_{\xi_e}\left( \frac{ y}{\varpi } \right) f_{\xi_k}(y)dy\\
&\hspace{-1.5ex} = \int_0^\infty \exp \left( - A_{e1} \left( \frac{ y}{\varpi } \right)^\delta\right)\exp(-A_{b1}y^\delta)\frac{\delta(A_{b1}y^\delta)^k}{y\Gamma(k)}dy\\
&\hspace{-1.5ex} \mathop=^{(a)}\frac{\delta A_{b1}^k}{\Gamma(k)}\int_0^\infty \exp\left(-(A_{b1}+ A_{e1}\varpi^{-\delta})y^\delta\right)y^{\delta k -1}dy\\
&\hspace{-1.5ex} = \left(\frac{A_{b1}}{A_{b1}+A_{e1}\varpi^{-\delta}} \right)^k,
\end{split}
\end{equation}
where $(a)$ follows from \cite[Eq. (3.351.3)]{gradshteyn2014table}. 
\end{proof}

In the following lemma we characterize a limit on the $k$-th best receiver. From Proposition \ref{P2}, one can obtain the maximum possible $k$-th index for a given probability constraint, $\tau = 1- \mathcal{P}_{nz,BB}$.
\begin{lemma} 
{\rm The maximum number of ordered best intended receivers that can securely communicate with the source in the presence of the best eavesdropper is given as}
\begin{equation} \label{EQ_bestPathGainEvek}
k^*  =  \log_{\frac{A_{b1}}{A_{b1}+A_{e1}\varpi^{-\delta}}}  \left(\tau \right). 
\end{equation}
\end{lemma}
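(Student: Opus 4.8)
The plan is to invert the closed-form expression for $\mathcal{P}_{nz,BB}$ obtained in Proposition~\ref{P2}, namely $\mathcal{P}_{nz,BB} = \left(\frac{A_{b1}}{A_{b1}+A_{e1}\varpi^{-\delta}}\right)^k$, and solve for $k$ in terms of the prescribed secrecy outage constraint $\tau = 1 - \mathcal{P}_{nz,BB}$. First I would observe that the base of the exponent, $q \triangleq \frac{A_{b1}}{A_{b1}+A_{e1}\varpi^{-\delta}}$, lies strictly in $(0,1)$ since $A_{b1}, A_{e1}, \varpi^{-\delta} > 0$, so the map $k \mapsto q^k$ is strictly decreasing and continuous on $(0,\infty)$; hence for any target value $\mathcal{P}_{nz,BB} = 1-\tau \in (0,1)$ there is a unique real $k$ achieving it. Setting $q^{k} = \tau$ wait — here one must be careful about whether the constraint is expressed as $\mathcal{P}_{nz,BB} = \tau$ or $1-\mathcal{P}_{nz,BB}=\tau$; following the statement we take $\mathcal{P}_{nz,BB}$ to be bounded by the quantity written, i.e. we solve $q^{k^*} = \tau$, which gives $k^* = \log_q(\tau) = \frac{\ln \tau}{\ln q}$, precisely equation~(\ref{EQ_bestPathGainEvek}).

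The key steps, in order, are: (i) recall Proposition~\ref{P2} to write $\mathcal{P}_{nz,BB}$ as $q^k$ with $q = \frac{A_{b1}}{A_{b1}+A_{e1}\varpi^{-\delta}}$; (ii) note $0 < q < 1$ and that monotonicity in $k$ makes the inverse well-defined; (iii) impose the secrecy constraint relating $\tau$ and $\mathcal{P}_{nz,BB}$, and take logarithms to base $q$ (equivalently natural logarithms and divide) to extract $k^* = \log_{q}(\tau)$; (iv) interpret $k^*$ as the maximal admissible index: because $q^k$ is decreasing, every $k \le k^*$ still satisfies the constraint, so $\lfloor k^* \rfloor$ is the largest integer number of best-ordered receivers that can securely communicate with the source, which is exactly the ``security region'' interpretation advertised in the introduction.

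The main obstacle is not analytical — the algebra is a one-line logarithm — but rather bookkeeping and interpretation: one must make sure the direction of the inequality is handled correctly (since $\ln q < 0$, dividing by it flips inequalities, so ``$\mathcal{P}_{nz,BB} \ge 1-\tau$'' translates into ``$k \le k^*$''), and one should state that $k^*$ is generally non-integer so the operative quantity is $\lfloor k^* \rfloor$, with the degenerate cases $\tau \to 1$ (giving $k^* \to 0$) and $\tau \to 0$ (giving $k^* \to \infty$) behaving as expected. I would also briefly remark that $k^*$ is increasing in $A_{b1}$ and decreasing in $A_{e1}$ and in $\varpi$, which is the qualitative sanity check that denser legitimate deployments or weaker eavesdroppers enlarge the secure zone; this makes the lemma's physical meaning transparent without any further computation.
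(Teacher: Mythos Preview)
Your approach is correct and is exactly what the paper does: it simply states that the result ``directly follows from Proposition~\ref{P2}'' by inverting $\mathcal{P}_{nz,BB}=q^{k}$ with $q=\frac{A_{b1}}{A_{b1}+A_{e1}\varpi^{-\delta}}$, and your added remarks on monotonicity and the $\lfloor k^*\rfloor$ interpretation only make that one-line inversion more explicit. One small slip in your closing sanity check: since $\varpi^{-\delta}$ decreases in $\varpi$, the base $q$ \emph{increases} toward $1$ as $\varpi$ grows, so $k^*=\ln\tau/\ln q$ is \emph{increasing} in $\varpi$, not decreasing---which is the physically expected direction (a larger legitimate-to-eavesdropper SNR ratio enlarges the secure set).
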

\begin{proof}
The proof directly follows from Proposition \ref{P2}.
\end{proof}
\subsubsection{The $k$-th nearest receiver \& the 1st best eavesdropper}

\begin{proposition} \label{P3}
{\rm The PNZ of the $k$-th nearest legitimate receiver in the presence of the 1st best non-colluding eavesdropper is given by } 
\begin{equation} \label{Pspc_nearest_best}
\begin{split}
&\hspace{-2ex}\mathcal{P}_{nz,NB}=  \frac{ \epsilon_k}{\theta_k  \Gamma(k)} \\ 
&\times H_{3,2}^{1,3}  \left[ { \frac{\varpi }{\theta_k} \left( \frac{A_k}{A_{e1}} \right)^{\frac{1}{\delta}}\left| {\begin{array}{*{20}c}
   {(1,1),(1-\mu_k,\frac{2}{\alpha_k}),(0,\frac{1}{\delta})}  \\
   {(k, \frac{1}{\delta}),(0,1)}   \\
\end{array}} \right.}\hspace{-1.5ex} \right]. 
\end{split}
\end{equation}
\end{proposition}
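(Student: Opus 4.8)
The plan is to evaluate the single integral in (\ref{Pnz_NB}) by inserting the CDF of the $k$-th nearest user's composite gain together with the density of the best eavesdropper's path-loss-with-fading variable, and then collapsing the result into one Fox $H$-function through a Mellin--Barnes argument.

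First I would substitute into (\ref{Pnz_NB}) the CDF (\ref{CDF_nearest}) evaluated at $z=\tfrac{1}{\varpi y}$, together with the density of the best ($k=1$) eavesdropper, $f_{\xi_e}(y)=\delta A_{e1}\,y^{\delta-1}\exp(-A_{e1}y^{\delta})$ --- the Weibull-type density already used in the proof of Proposition~\ref{P2}. Because $F_{g_k/r_k^{\upsilon}}(\cdot)=1-\tfrac{\epsilon_k}{\theta_k\Gamma(k)}H_{2,2}^{2,1}[\,\cdot\,]$ and $\int_0^\infty f_{\xi_e}(y)\,dy=1$, the constant ``$1$'' terms cancel, leaving
\[
\mathcal{P}_{nz,NB}=\frac{\epsilon_k\,\delta A_{e1}}{\theta_k\Gamma(k)}\int_0^\infty H_{2,2}^{2,1}\!\left[\tfrac{\theta_k}{\varpi A_k^{1/\delta}}\,y^{-1}\right] y^{\delta-1}e^{-A_{e1}y^{\delta}}\,dy,
\]
where the $H_{2,2}^{2,1}$ carries the same parameter lists as in (\ref{CDF_nearest}). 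The substitution $u=y^{\delta}$ is the natural next move: the Jacobian absorbs the factor $\delta$ and the power $y^{\delta-1}$, and the argument of the $H$-function becomes $b\,u^{-1/\delta}$ with $b=\theta_k/(\varpi A_k^{1/\delta})$, so that $\mathcal{P}_{nz,NB}=\tfrac{\epsilon_k A_{e1}}{\theta_k\Gamma(k)}\int_0^\infty H_{2,2}^{2,1}[\,b\,u^{-1/\delta}\,]\,e^{-A_{e1}u}\,du$.

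Next I would write the $H$-function through its defining Mellin--Barnes contour integral, interchange the order of integration (legitimate by absolute convergence on a suitable vertical contour), and evaluate the inner integral with the elementary Mellin transform $\int_0^\infty u^{s/\delta}e^{-A_{e1}u}\,du=A_{e1}^{-1-s/\delta}\Gamma(1+s/\delta)$; equivalently one may write $e^{-A_{e1}u}=H_{0,1}^{1,0}[A_{e1}u]$ and invoke the Mellin convolution formula for the product of two $H$-functions. This inserts a factor $\Gamma(1+s/\delta)$ into the Mellin integrand, cancels the prefactor $A_{e1}$, and yields a contour integral in the single variable $w=\tfrac{\theta_k}{\varpi}(A_{e1}/A_k)^{1/\delta}$ whose integrand is a ratio of Gamma functions; reading off $(m,n,p,q)$ and the parameter pairs identifies it as $\tfrac{1}{A_{e1}}H_{2,3}^{3,1}[w]$ with upper list $(1-k,\tfrac{1}{\delta}),(1,1)$ and lower list $(0,1),(\mu_k,\tfrac{2}{\alpha_k}),(1,\tfrac{1}{\delta})$. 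Finally, applying the reflection identity $H_{p,q}^{m,n}[z]=H_{q,p}^{n,m}[1/z]$ (transposing $(m,n,p,q)$ and sending $(a_i,A_i)\mapsto(1-a_i,A_i)$, $(b_j,B_j)\mapsto(1-b_j,B_j)$) recasts it as the $H_{3,2}^{1,3}$ expression in (\ref{Pspc_nearest_best}) with argument $1/w=\tfrac{\varpi}{\theta_k}(A_k/A_{e1})^{1/\delta}$, and after the two factors of $A_{e1}$ cancel the coefficient is exactly $\tfrac{\epsilon_k}{\theta_k\Gamma(k)}$.

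I expect the parameter bookkeeping of the Fox $H$-function to be the main obstacle. After introducing the extra factor $\Gamma(1+s/\delta)$ one must correctly decide which Gamma factors occupy the ``$m$'' versus ``$n$'' and ``$p$'' versus ``$q$'' slots, and then apply the $z\mapsto 1/z$ reflection with every parameter shifted by $1$ under the transposition --- this is where sign slips are most likely. A smaller point worth acknowledging is the admissibility of the contour: one should check that the $\mu_k$- and $\delta$-dependent growth conditions on the Mellin integrand permit the interchange of integration and the evaluation of the inner integral, which is routine but not automatic.
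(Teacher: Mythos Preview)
Your proposal is correct and follows essentially the same route as the paper's proof in Appendix~\ref{Appen_F}: both substitute (\ref{CDF_nearest}) and $f_{\xi_e}$ into (\ref{Pnz_NB}), cancel the unit term, and collapse the remaining integral of an $H$-function against the stretched exponential via the Mellin convolution of two Fox $H$-functions. The only cosmetic differences are that the paper applies the reflection identity \emph{before} the Mellin step (turning $H_{2,2}^{2,1}[c/y]$ into $H_{2,2}^{1,2}[y/c]$) and writes $\exp(-A_{e1}y^{\delta})$ directly as an $H_{0,1}^{1,0}$ in $y$, whereas you first substitute $u=y^{\delta}$ and apply the reflection at the end; these commute and lead to the same $H_{3,2}^{1,3}$ in (\ref{Pspc_nearest_best}).
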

\begin{proof} 
See Appendix \ref{Appen_F}.
\end{proof}
\subsubsection{The $k$-th best receiver \& the 1st nearest eavesdropper}
\begin{proposition} \label{P4}
{\rm The PNZ of the $k$-th best legitimate receiver in the presence of the 1st nearest non-colluding eavesdropper is given by }
\begin{equation} \label{Pspc_best_nearest}
\begin{split}
&\hspace{-2ex} \mathcal{P}_{nz,BN}=  1 - \frac{\epsilon_e }{\theta_e  \Gamma(k)} \\
& \hspace{-3ex}\times H_{3,2}^{1,3} \left[ { \frac{ 1}{\theta_e \varpi }\left( \frac{A_e}{A_{b1}} \right)^{\frac{1}{\delta}}\hspace{-0.5ex} \left| \hspace{-1ex}{\begin{array}{*{20}c}
   {(1,1),(1 -\mu_e,\frac{2}{\alpha_e}),(1-k,\frac{1}{\delta})}  \\
   {(1, \frac{1}{\delta}),(0,1)}   \\
\end{array}} \right.} \hspace{-1.5ex} \right]. 
\end{split}
\end{equation}
\end{proposition}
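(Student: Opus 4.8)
The plan is to mirror the argument already used for Propositions \ref{P1} and \ref{P3}: start from the integral representation (\ref{Pnz_BN}), insert the relevant density and distribution functions obtained in the earlier lemmas, and reduce the resulting single integral to a Mellin--Barnes contour integral that can be recognised as one Fox $H$-function.

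First I would substitute into (\ref{Pnz_BN}) the PDF of $\xi_k$, obtained by differentiating the CDF $F_{\xi_k}(y)=\gamma(k,A_{b1}y^{\delta})/\Gamma(k)$ derived for the $k$-th best user, namely $f_{\xi_k}(y)=\frac{\delta A_{b1}^{k}}{\Gamma(k)}\,y^{\delta k-1}e^{-A_{b1}y^{\delta}}$, together with the CDF of the composite channel gain of the \emph{first} nearest eavesdropper, i.e.\ (\ref{CDF_nearest}) specialised to index $1$ (so $\Gamma(1)=1$ and the upper pair $(1-k,\tfrac1\delta)$ becomes $(0,\tfrac1\delta)$) and to the eavesdropper parameters $A_e=\pi\lambda_e$, $\epsilon_e$, $\theta_e$, $\mu_e$, $\alpha_e$. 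Because the constant term of that CDF integrates against $f_{\xi_k}$ to $1$ (as $f_{\xi_k}$ is a probability density), one is left with $\mathcal{P}_{nz,BN}=1-\frac{\epsilon_e}{\theta_e}\frac{\delta A_{b1}^{k}}{\Gamma(k)}\int_0^{\infty}y^{\delta k-1}e^{-A_{b1}y^{\delta}}\,H_{2,2}^{2,1}\!\left[\frac{\theta_e\varpi}{A_e^{1/\delta}\,y}\,\Big|\,\cdots\right]dy$, where the Fox $H$-function is the one appearing in (\ref{CDF_nearest}); convergence at both ends is immediate since $y^{\delta k-1}$ is integrable at $0$ and the exponential dominates at infinity.

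Next I would write $e^{-A_{b1}y^{\delta}}$ as $H_{0,1}^{1,0}[\,A_{b1}y^{\delta}\,]$ (the same elementary identity underlying (\ref{PDFgi})), change variables $t=y^{\delta}$ to clear the power inside that $H$-function, and replace the $H_{2,2}^{2,1}$ by its defining Mellin--Barnes contour integral, whose integrand is $\frac{\Gamma(s)\,\Gamma(\mu_e+2s/\alpha_e)\,\Gamma(1-s/\delta)}{\Gamma(1+s)}$. Interchanging the order of integration -- which I would justify on the usual strip of analyticity of the $H$-function -- reduces the inner integral to $\int_0^{\infty}t^{\rho-1}e^{-A_{b1}t}\,dt=A_{b1}^{-\rho}\Gamma(\rho)$, essentially \cite[eq. (3.351.3)]{gradshteyn2014table}, which injects an extra factor $\Gamma(k+s/\delta)$ and a power $A_{b1}^{-s/\delta}$ into the contour integrand. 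The power $A_{b1}^{-s/\delta}$ combines with $(\theta_e\varpi/A_e^{1/\delta})^{-s}$ into $(\theta_e\varpi(A_{b1}/A_e)^{1/\delta})^{-s}$, and the reflection $s\mapsto -s$ then turns the argument into its reciprocal $\frac{1}{\theta_e\varpi}(A_e/A_{b1})^{1/\delta}$ and rearranges the gamma factors into precisely the Mellin--Barnes kernel of the $H_{3,2}^{1,3}$ in (\ref{Pspc_best_nearest}): the upper pairs $(1,1)$, $(1-\mu_e,\tfrac{2}{\alpha_e})$, $(1-k,\tfrac1\delta)$ are read off from $\Gamma(-s)$, $\Gamma(\mu_e-2s/\alpha_e)$, $\Gamma(k-s/\delta)$, and the lower pairs $(1,\tfrac1\delta)$, $(0,1)$ from $\Gamma(1+s/\delta)$ and $\Gamma(1-s)$. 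Collecting constants, the factors $\delta$ and $A_{b1}^{k}$ cancel and (\ref{Pspc_best_nearest}) follows.

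I expect the main obstacle to be purely the Fox $H$-function bookkeeping: correctly reading the Mellin--Barnes representation of the $H_{2,2}^{2,1}$ factor, tracking how the extra $\Gamma(k+s/\delta)$ produced by the inner integral -- which, after the $s\mapsto -s$ reflection, migrates from a ``$b$-type'' (lower) to an ``$a$-type'' (upper) parameter -- together with the reflection itself redistributes the parameter lists so that the result carries the order $(m,n,p,q)=(1,3,3,2)$, and verifying the contour and convergence conditions that legitimise both the interchange of integrations and the final identification. Everything else is routine once (\ref{Pnz_BN}) and the densities from Lemma \ref{Lemma1} and the two best-user lemmas are in hand.
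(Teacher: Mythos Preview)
Your proposal is correct and follows essentially the same route as the paper's Appendix~\ref{Appen_G}: substitute $f_{\xi_k}$ and the first-nearest-eavesdropper CDF from Lemma~\ref{Lemma1} into (\ref{Pnz_BN}), express the exponential as a Fox $H$-function, and reduce the single integral to one $H$-function via the Mellin machinery. The only cosmetic difference is that the paper first invokes the reflection identity \cite[eq.~(8.3.2.7)]{prudnikov1990integrals} to turn $H_{2,2}^{2,1}[\cdot/y]$ into $H_{2,2}^{1,2}[\cdot\,y]$ and then applies the product formula \cite[eq.~(2.25.1.1)]{prudnikov1990integrals} directly, whereas you unpack the Mellin--Barnes integral and perform the $s\mapsto -s$ reflection by hand; these are the same computation.
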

\begin{proof} 
See Appendix \ref{Appen_G}.
\end{proof}
\subsection{Performance Characterization of Ergodic Secrecy Capacity}
From the perspective of the eavesdroppers' received signal quality, the first nearest or best eavesdropper can achieve the highest composite channel gain. As such, the ergodic secrecy capacity can be similarly analyzed for the four considered scenarios. Motivated from (\ref{Cs_ergodic}), the ergodic secrecy capacity can be obtained from the difference of the ergodic capacities between the transmitter-legitimate receiver link and the transmitter-eavesdropper link \cite{7247765}. In accordance with our proposed user association method, i.e., the $k$-th nearest or best user, the ergodic capacity of the transmitter to the $k$-th nearest legitimate receiver, $ R_{N,k}^M$, and the transmitter to the $k$-th best legitimate receiver, $R_{B,k}^M$, are correspondingly obtained in the follow proposition in order to simplify our derivations of ergodic secrecy capacity.
\begin{proposition} \label{Propo_7}
The ergodic capacity of the transmitter to the $k$-th nearest or best legitimate user, $R_{N,k}^M$ and $R_{B,k}^M$, are respectively given by
\begin{subequations}
\begin{equation}
 R_{N,k}^M = \frac{\epsilon_k H_{3,3}^{2,3}\hspace{-0.5ex} \left[ { \frac{\eta_k A_k^{\frac{1}{\delta}}}{ \theta_k }\hspace{-0.5ex}  \left|\hspace{-1.2ex} {\begin{array}{*{20}c}
    {(1,1),(1,1),(1-\mu_k , \frac{2}{\alpha_k })}  \\
   {(1,1),(k,\frac{1}{\delta}),(0,1)}  \\
\end{array}} \right.} \hspace{-1ex}\right]}{\theta_k \Gamma(k)\ln 2} ,
\end{equation}
\begin{equation}
 R_{B,k}^M = \frac{\delta}{\Gamma(k) \ln 2} H_{3,2}^{2,2} \left[ { A_{b1}\eta_k^\delta  \left|\hspace{-1ex} {\begin{array}{*{20}c}
    {(1,\delta),(1,\delta))}  \\
   {(k,1),(1,\delta),(0,\delta)}  \\
\end{array}} \right.} \hspace{-1.5ex}\right],
\end{equation}
\end{subequations} 
where $H_{p,q}^{m,n}[.]$ is the Fox's $H$-function.
\end{proposition}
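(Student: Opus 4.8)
\textit{Proof proposal.} The plan is to derive both ergodic capacities from the single identity
\begin{equation}
R^M=\mathbb{E}\!\left[\log_2\!\big(1+\eta_k Z\big)\right]=\frac{1}{\ln 2}\int_0^\infty \ln(1+\eta_k z)\,f_Z(z)\,dz ,
\end{equation}
taking $Z=g_k/r_k^{\upsilon}$ (with $f_Z$ from Lemma~\ref{Lemma1}) for the nearest‑user capacity and $Z=1/\xi_k$ (with $f_Z$ from (\ref{PDF_best})) for the best‑user capacity, and then to evaluate the resulting integral in closed form as a single Fox $H$-function.

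For the nearest user I would first represent the logarithm kernel as a Fox $H$-function, namely $\ln(1+x)=H_{2,2}^{1,2}$ with upper parameters $(1,1),(1,1)$ and lower parameters $(1,1),(0,1)$, equivalently by its Mellin--Barnes contour integral. Substituting this and the $H_{1,1}^{1,1}$ density of Lemma~\ref{Lemma1} into the integral, I would invoke the Mellin convolution theorem for a product of two Fox $H$-functions \cite{mathai1978h}: interchanging the order of integration, the inner $z$-integral is the Mellin transform of the PDF, a ratio of Gamma factors, and reassembling the outer contour integral gives one Fox $H$-function. Collecting the parameter pairs — the $(1,1)$ from the logarithm, the $(1-\mu_k,2/\alpha_k)$ and $(0,1)$ inherited from $f_{g_k}$, and the $(k,1/\delta)$ coming from the distance‑ordering Gamma term — and tracking the argument scaling $\eta_k A_k^{1/\delta}/\theta_k$ yields exactly the stated $H_{3,3}^{2,3}$ form, with the prefactor $\epsilon_k/(\theta_k\Gamma(k)\ln 2)$.

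For the best user I would first reduce the integral to a Gamma‑weighted one: using (\ref{PDF_best}) and the substitution $t=z^{-\delta}$ turns the expectation into $\tfrac{A_{b1}^{k}}{\Gamma(k)\ln 2}\int_0^\infty \ln\!\big(1+\eta_k t^{-1/\delta}\big)\,t^{k-1}e^{-A_{b1}t}\,dt$. Writing the exponential as $H_{0,1}^{1,0}[A_{b1}t\,|\,{-};(0,1)]$ and the logarithm again as the $H_{2,2}^{1,2}$ above but with argument proportional to $t^{-1/\delta}$, the same product‑of‑$H$-functions integral applies; the power $t^{-1/\delta}$ is what converts the ordering parameter into the $\delta$-scaled pairs, and simplification produces the claimed $H_{3,2}^{2,2}$ with argument $A_{b1}\eta_k^{\delta}$ and prefactor $\delta/(\Gamma(k)\ln 2)$. (Alternatively one may keep the variable $z$, write $\exp(-A_{b1}z^{-\delta})$ directly as a Fox $H$, and apply the convolution formula; both routes agree after using the Gamma‑function duplication/reflection identities.)

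The main obstacle I expect is bookkeeping rather than analysis: correctly assembling and ordering the numerator/denominator parameter lists of the output $H$-function after the Mellin convolution, and verifying the existence/convergence conditions of the contour integral (separation of the poles of the two Gamma products and the restriction on the argument's phase), which is delicate precisely because the distance ordering enters with a $1/\delta$-scaled parameter in the nearest case while the change of variables turns it into a $\delta$-scaled one in the best case. Everything else — the expectation‑to‑integral identity, the $H$-function representation of $\ln(1+x)$, and the Mellin transform of a Fox $H$ — is standard.
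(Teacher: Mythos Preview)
Your proposal is correct and follows essentially the same route as the paper: both represent $\ln(1+x)$ as an $H_{2,2}^{1,2}$ Fox $H$-function, pair it with the $H$-function form of the relevant density (Lemma~\ref{Lemma1} for the nearest user, and the $H_{0,1}^{1,0}$ representation of the exponential in $f_{\xi_k}$ for the best user), and then apply the Mellin transform of a product of two Fox $H$-functions \cite[eq.~(2.25.1.1)]{prudnikov1990integrals}. The only cosmetic difference is that the paper computes $R_{B,k}^M$ as $\mathbb{E}_{\xi_k}[\log_2(1+\eta_k/\xi_k)]$ directly in the variable $\xi_k$, whereas you start from $Z=1/\xi_k$ and substitute $t=z^{-\delta}$; after that substitution the two integrals coincide.
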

\begin{proof}
See Appendix. \ref{ErgSecrecyCapacity_Proof}.
\end{proof}
Similarly, the $R_{N,k}^W$ and $R_{B,k}^W$ can be easily derived by making some simple manipulations. Accordingly, considering either the $1$st nearest or best eavesdropper, by letting $k = 1$ for $R_{N,k}^W$ and $R_{B,k}^W$, and setting $R_{N,1}^W$ and $R_{B,1}^W$ as $R_{N}^W$ and $R_{B}^W$, then we have the following remark.
\begin{remark}
{\rm Taking account of the aforementioned four scenarios, the ergodic secrecy capacity are respectively given by}
\begin{itemize}
\item case 1: 
\begin{subequations}
\begin{equation} \label{ASC_NN}
\bar{C}_{s:k,NN} = [R_{N,k}^M - R_N^W]^+,
\end{equation}
\item case 2: 
\begin{equation} \label{ASC_BB}
\bar{C}_{s:k,BB} = [R_{B,k}^M - R_B^w]^+,
\end{equation}
\item case 3: 
\begin{equation} \label{ASC_NB}
\bar{C}_{s:k,NB} = [R_{N,k}^M - R_B^W]^+,
\end{equation}
\item case 4: 
\begin{equation} \label{ASC_BN}
\bar{C}_{s:k,BN} = [R_{B,k}^M - R_N^W]^+.
\end{equation}
\end{subequations}
\end{itemize}
\end{remark}
For the sake of showing brevity, the details are not given in (\ref{ASC_NN}-\ref{ASC_BN}), respectively, however, those can be easily obtained from Proposition. \ref{Propo_7} by making some simple algebraic substitutions.
%========================================================================================================================================================
\section{Numerical Results and Discussions} \label{SimulationResults}
For a given network configuration, the secrecy metrics, including the COP and PNZ, are under analysis in Section \ref{Sec_noncoll}. In this section, the accuracy of our analysis is validated by presenting numerical simulations. In the whole simulation configuration, it is assumed that $r = 10$ and the simulation solely takes places under $\alpha$-$\mu$ fading channels.

In addition, we will study the effects of the density, the path-loss exponent $\upsilon$, different $\alpha$-$\mu$ fading factors and dimensions of space on the secrecy metrics. Note that in our simulation, the WAFO toolbox of MATLAB \cite{Wafo2000} has been used to generate $\alpha$-$\mu$ variates.

It is very important to note that higher system performance is achieved at lower COP as well as higher PNZ probabilities.
\subsection{Results pertaining to COP}
This subsection studies the system performance with respect to the nearest and best legitimate receivers, and we provide a comparison between the two performances.

The $\mathcal{P}_{co,N}$ stated in \eqref{pco_nearest} versus the $k$-th nearest legitimate receiver under $\alpha$-$\mu$ fading is shown in Fig. \ref{Pco_index_case1}.
It demonstrates how the COP for the $k$-th nearest legitimate receiver is affected as the legitimate user's index increases, for various $\alpha$-$\mu$ fading scenarios. In addition, Fig. \ref{Pco_index_case1} also demonstrates the conformity of our analytical derivations to simulation outcomes.
\begin{figure}[!t]
\centering
\includegraphics[width=3.3in]{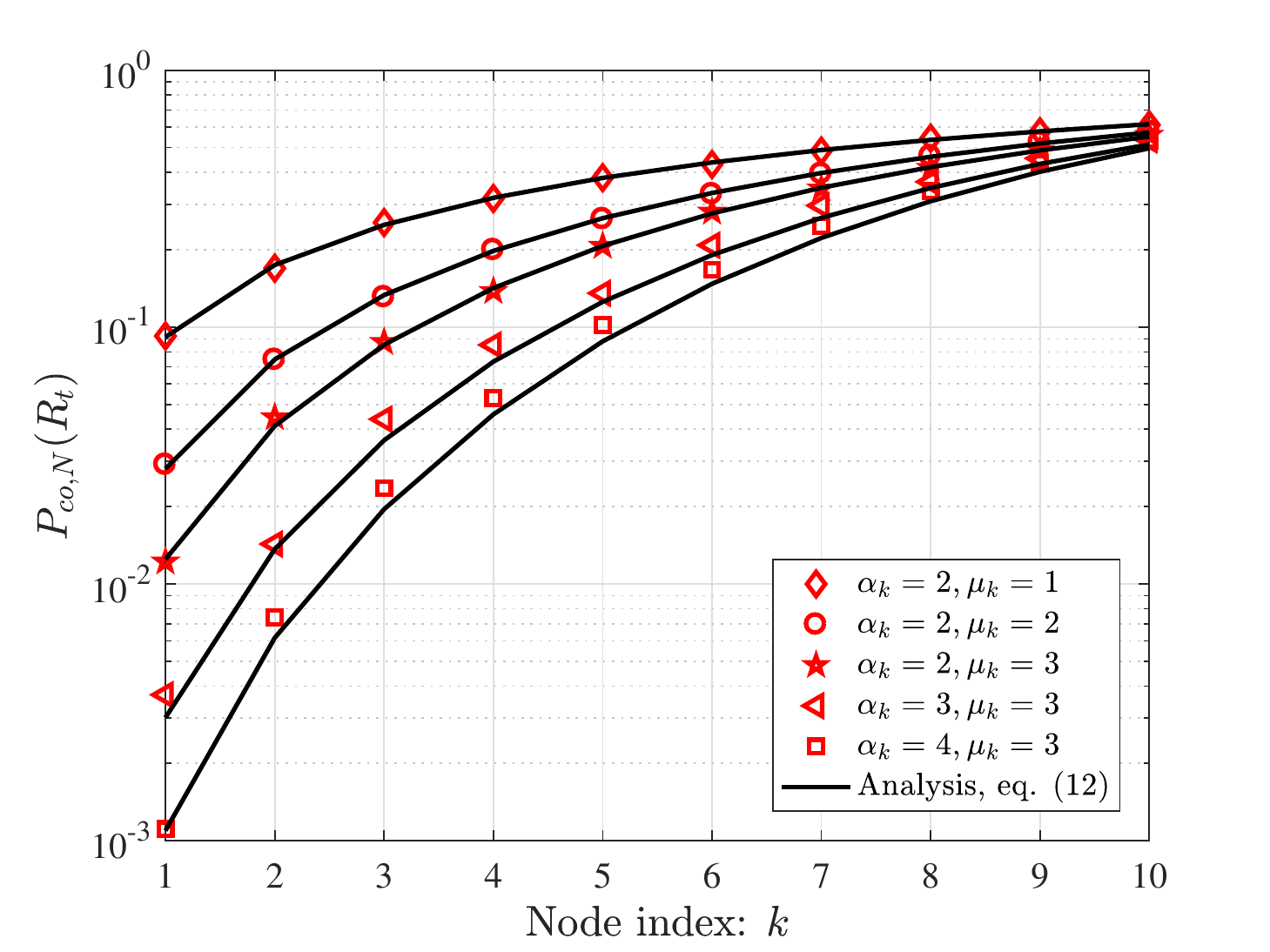}
\caption{\small $P_{co,N}$ versus the $k$-th nearest legitimate receiver for $\eta_k = 5$ dB, $\lambda_b = 1$, $N_a = N_b = 1$ $R_t =1$. }
\label{Pco_index_case1}
\end{figure}

The $\mathcal{P}_{co,B}$ drafted in \eqref{pco_best} versus $\lambda_b$ is illustrated and compared with the $\mathcal{P}_{co,B}$ in Fig. \ref{Pco_com_Best} for selected values of the $k$-th legitimate nearest/best receiver. From this graph, we obtain the conclusions that: (i) the connection outage occurs with a higher probability for larger index values and larger $\lambda_b$; (ii) since $\lambda_b$ grows in equal steps, the gap between $\mathcal{P}_{co,N}$ and $\mathcal{P}_{co,B}$ tends to be larger for higher index values $\lambda_b$. 
\begin{figure}[!t]
\centering
\includegraphics[width=3.3in]{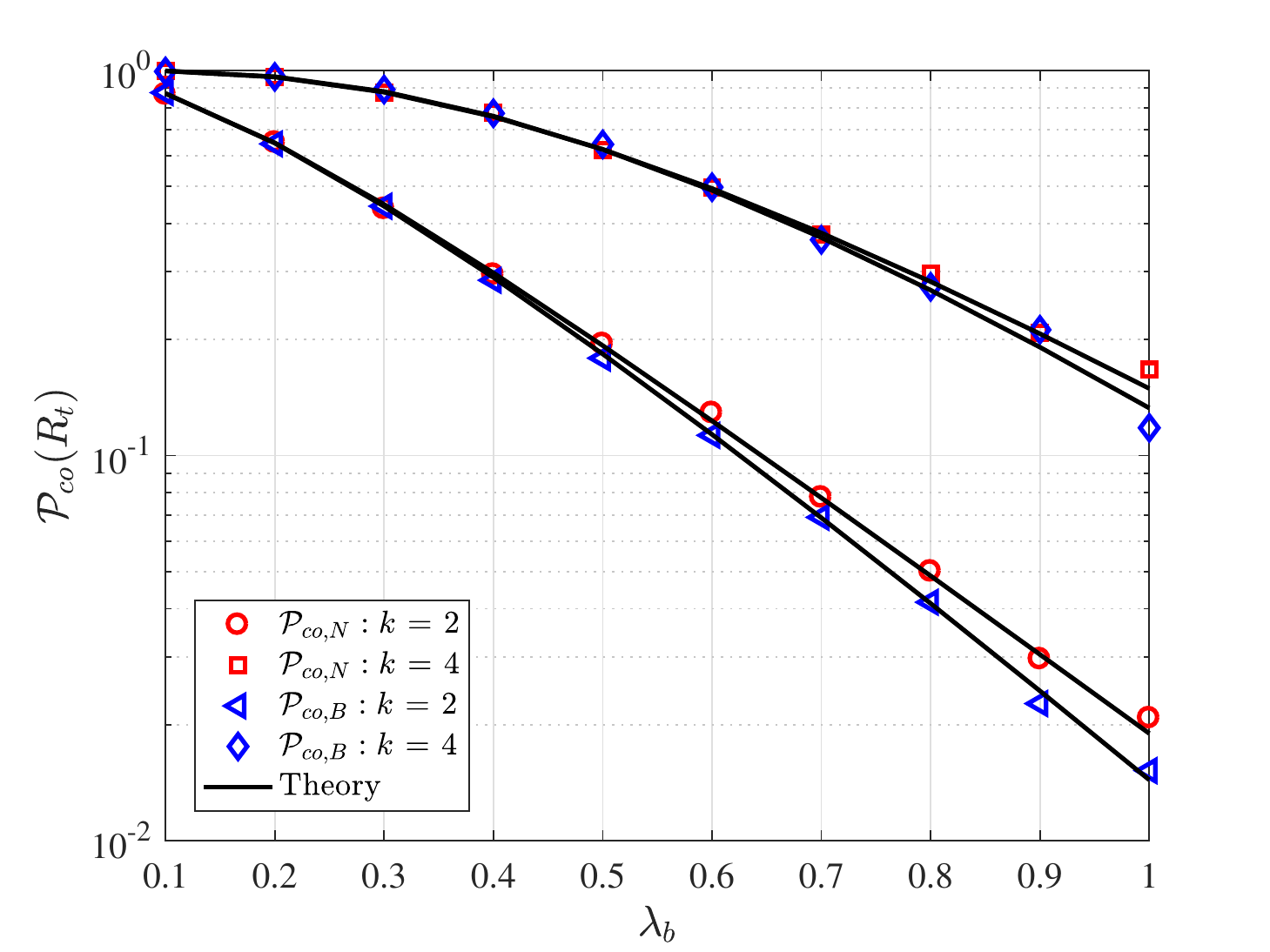}
\caption{\small $P_{co}$ versus $\lambda_b$ for selected $k$-th ($k\in \{ 2,4\}$) nearest/best user when $\eta_k = 0$ dB, $R_t =1$, $\alpha_k = 2$, $\mu_k = 3$, $\upsilon = 4$, $d = 2$. }
\label{Pco_com_Best}
\end{figure}

Having studied the performance with respect to the nearest and best legitimate receivers, we compare the COP for the 1st nearest and best legitimate receiver for various selected path-loss exponent $\upsilon$ values and $N_b$, in the next step. The result of this comparison is shown in Fig. \ref{Pco_com_Antenna}. Strikingly, one can conceive that on one hand, higher path-loss exponent always results in a higher probability of connection outage both for the $k$-th nearest and best receivers. On the other hand, the $k$-th best receiver always owns a relatively lower connection outage probability compared with the $k$-th nearest one, as predicted. In addition, the COP deserves with lower probability due to its better quality of received signal, as $N_b$ increase.
\begin{figure}[!t]
\centering
\includegraphics[width=3.3in]{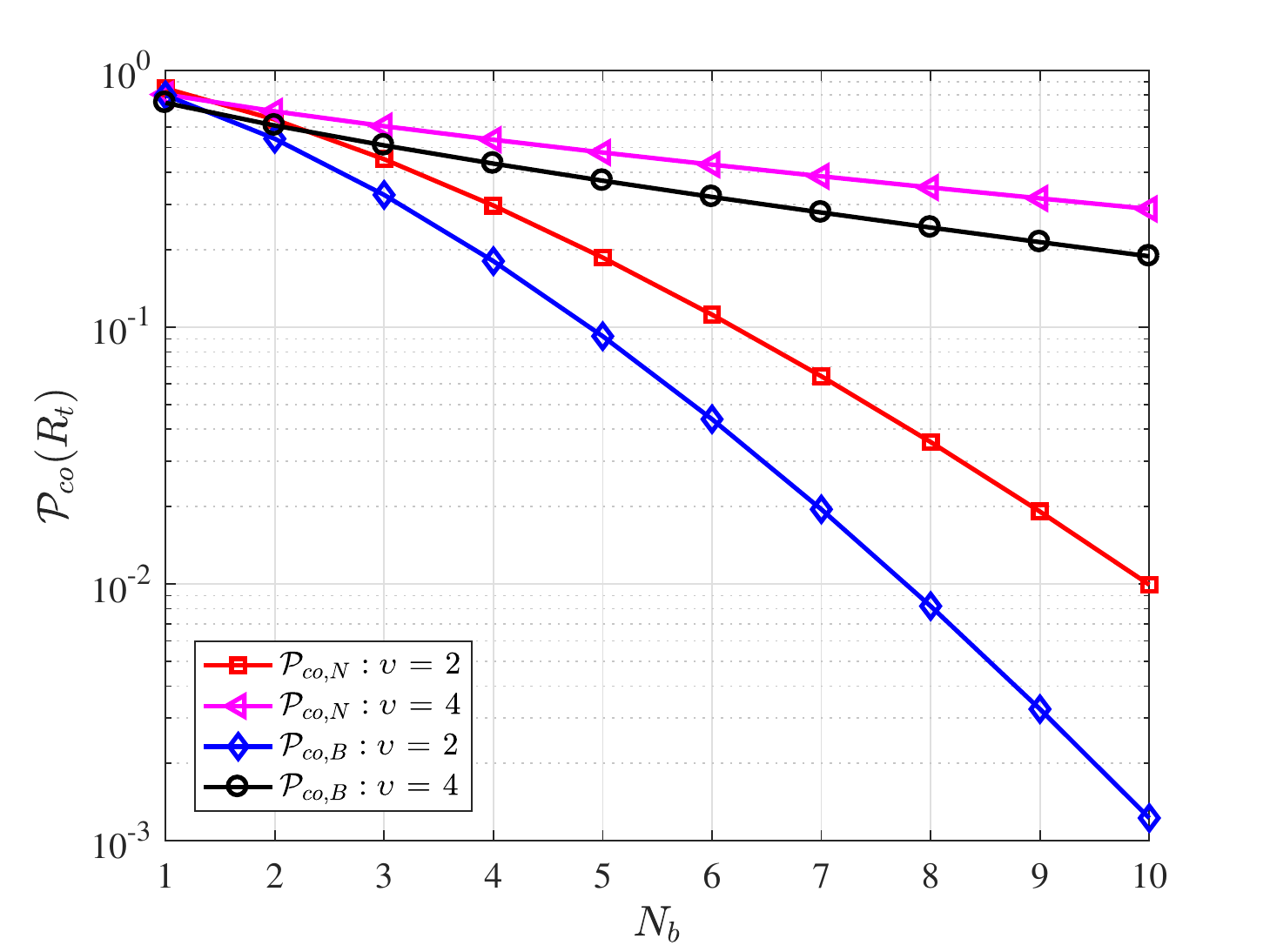}
\caption{\small Comparison of $\mathcal{P}_{co,N}$ to $\mathcal{P}_{co,B}$ versus $N_b$ for $\lambda_b = 0.1$, $\eta_k = -5$ dB, $\alpha_k=2$, $\mu_k =3$, $R_t =1$, $d=3$ and various path-loss exponent $\upsilon \in \{ 2,4\}$.}
\label{Pco_com_Antenna}
\end{figure} 
\subsection{Results pertaining to PNZ}
In this subsection, we study the probability of non-zero secrecy capacity in the presence of non-colluding eavesdroppers. Be reminded that higher PNZ probabilities indicate a better system performance. For the sake of simplicity, the first nearest/best eavesdropper is considered for evaluating the secrecy risk.
Figs. \ref{Pnz_NN_fading}--\ref{Fig_com_lambda} demonstrate the PNZ versus the $k$-th legitimate receiver in the presence of non-colluding eavesdroppers. It is easily observed that our theoretical analyses are in strong agreement with the simulation outcomes.

Fig. \ref{Pnz_NN_fading} plots the PNZ against the $k$-th nearest legitimate receiver's index for selected values of $\alpha$ and $\mu$ when the nearest eavesdropper is considered. It is observed here that almost for all values of the $k$-th user index, the PNZ performance is better (probability is higher) for smaller values of $\alpha$, $\mu_{\rm m}$ and $\mu_{\rm w}$.
\begin{figure}[!t]
\centering{\includegraphics[width=3.3in]{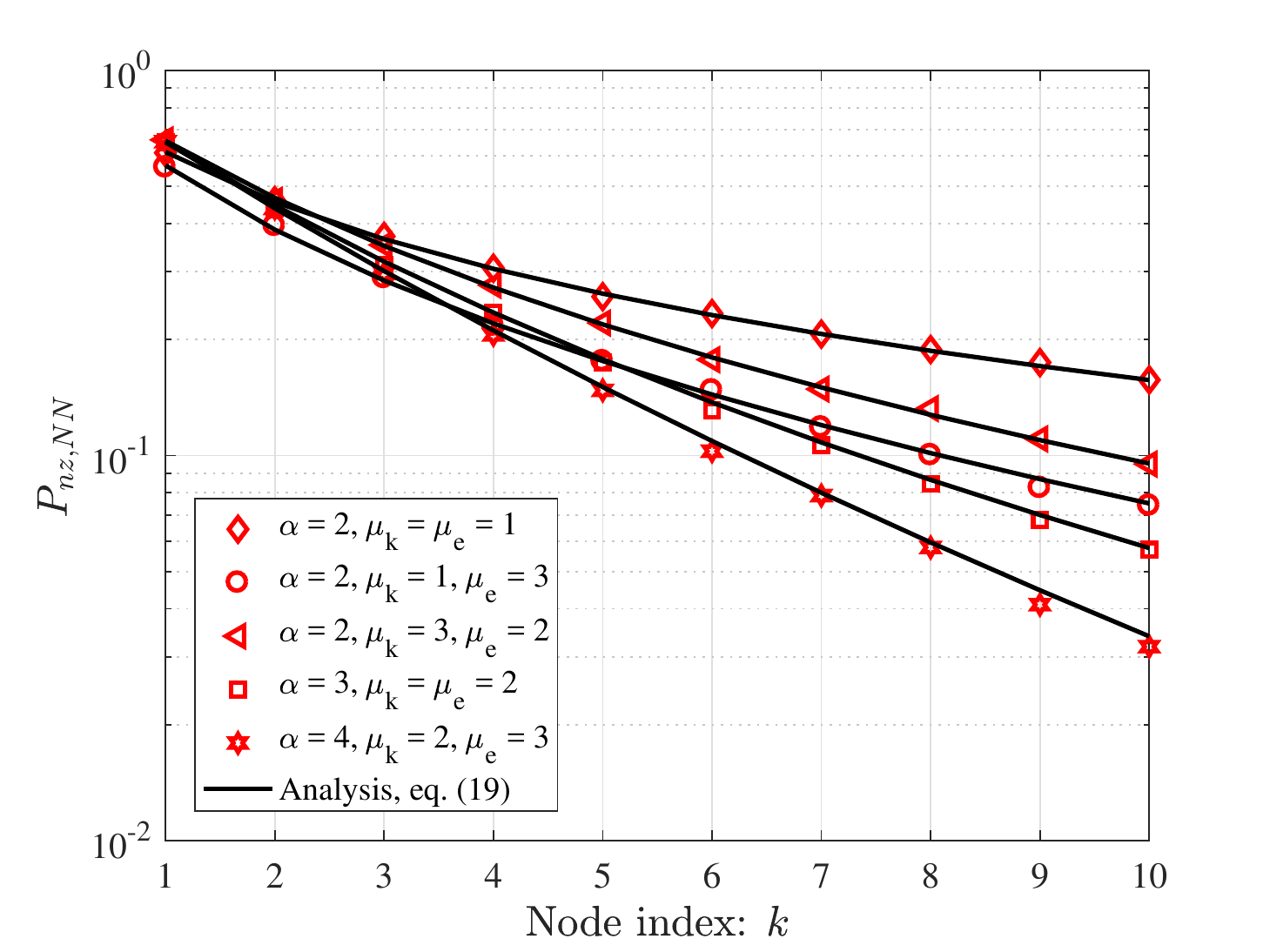}}
\caption{\small{$\mathcal{P}_{nz,NN}$ versus the $k$-th nearest legitimate receiver for $\varpi = 0$ dB, $N_a = N_b = N_e =1$, $\alpha_k = \alpha_e = \alpha$, $\lambda_b = 0.2$, $\lambda_e =0.1$, $d = 2$, $\upsilon = 2$.}}
\label{Pnz_NN_fading}
\end{figure}
%-------------------
\begin{figure}[!t]
\centering{\includegraphics[width=3.3in]{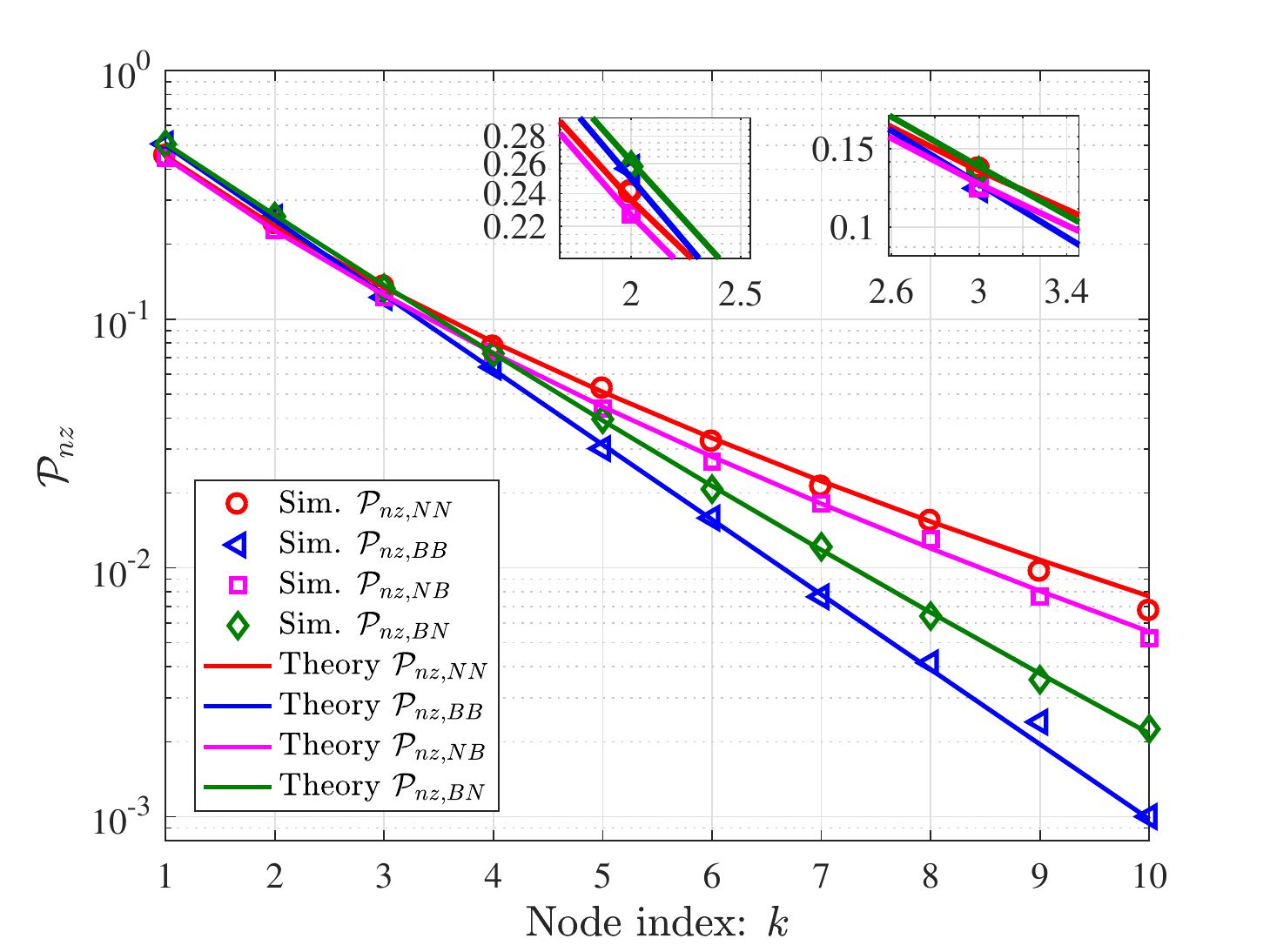}}
\caption{\small{$\mathcal{P}_{nz}$ versus the $k$-th legitimate receiver for $\varpi= 0$ dB, $\lambda_b = 0.2$, $\lambda_e =0.1$, $N_a =2, N_b = 1,N_e = 2$, $\alpha_k = 2$, $\mu_k = 1$, $\alpha_e = 2$, $\mu_e = 4$, $d= 2$, $\upsilon = 2$.}}
\label{Pnz_com_KthUser}
\end{figure}
\begin{figure}[!t]
\centering{\includegraphics[width=3.3in]{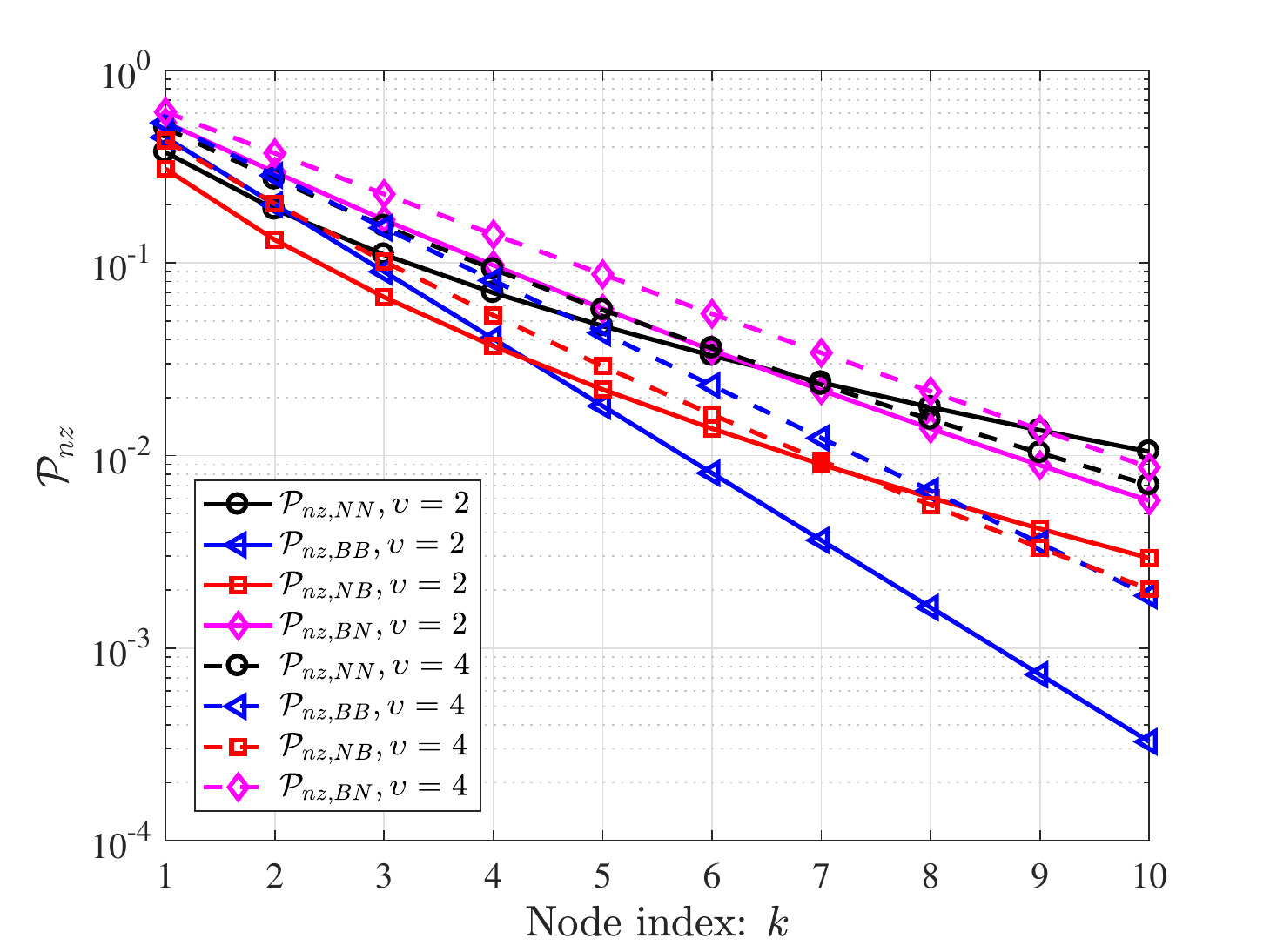}}
\caption{\small{$\mathcal{P}_{nz}$ versus the $k$-th nearest/best legitimate receiver for $\varpi = 0$ dB, $\lambda_b = 0.2$, $\lambda_e =0.1$, $N_a =2, N_b = 1,N_e = 2$, $\alpha_k = \alpha_e = \mu_k = 2$, $\mu_e = 3$, and $d = 3$.}}
\label{Pnz_com_PathLoss}
\end{figure}
%----------------------------
\begin{figure}[!t]
\centering
\includegraphics[width=3.3in]{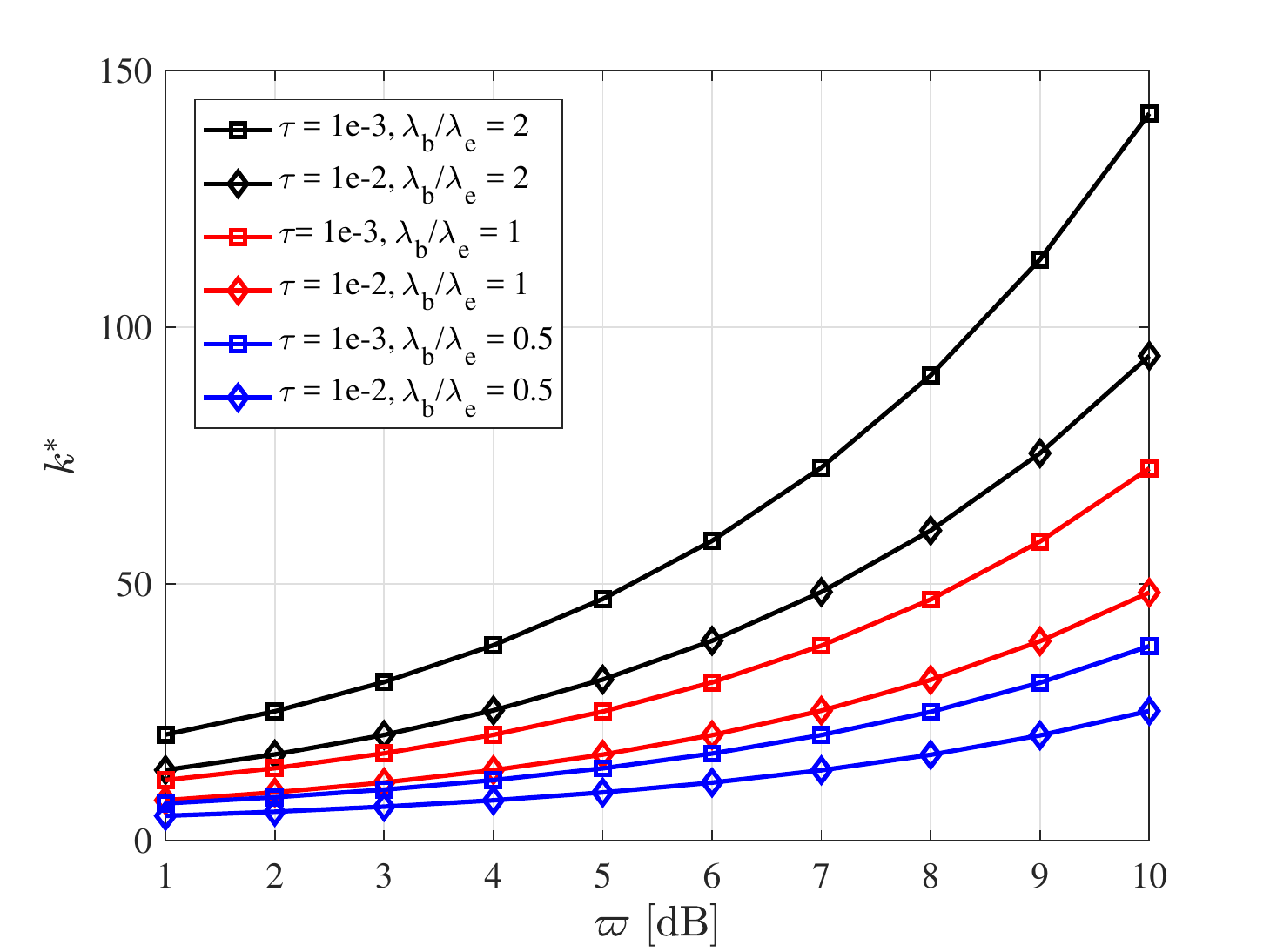}
\caption{\small The maximum size of the best ordered user $k^*$ versus $\varpi$ for selected values of $\tau$ and density ratios $\lambda_b/\lambda_e$, according to (\ref{EQ_bestPathGainEvek}), when $N_a = N_b = N_e = 1$, $\alpha_k= 3, \mu_k = 2$, $\alpha_k= 2, \mu_k = 3$, and $d= \upsilon = 2$.}
\label{Pnz_best_KthUSER}
\end{figure} 
%----------------
\begin{figure}[!t]
\centering{\includegraphics[width=3.3in]{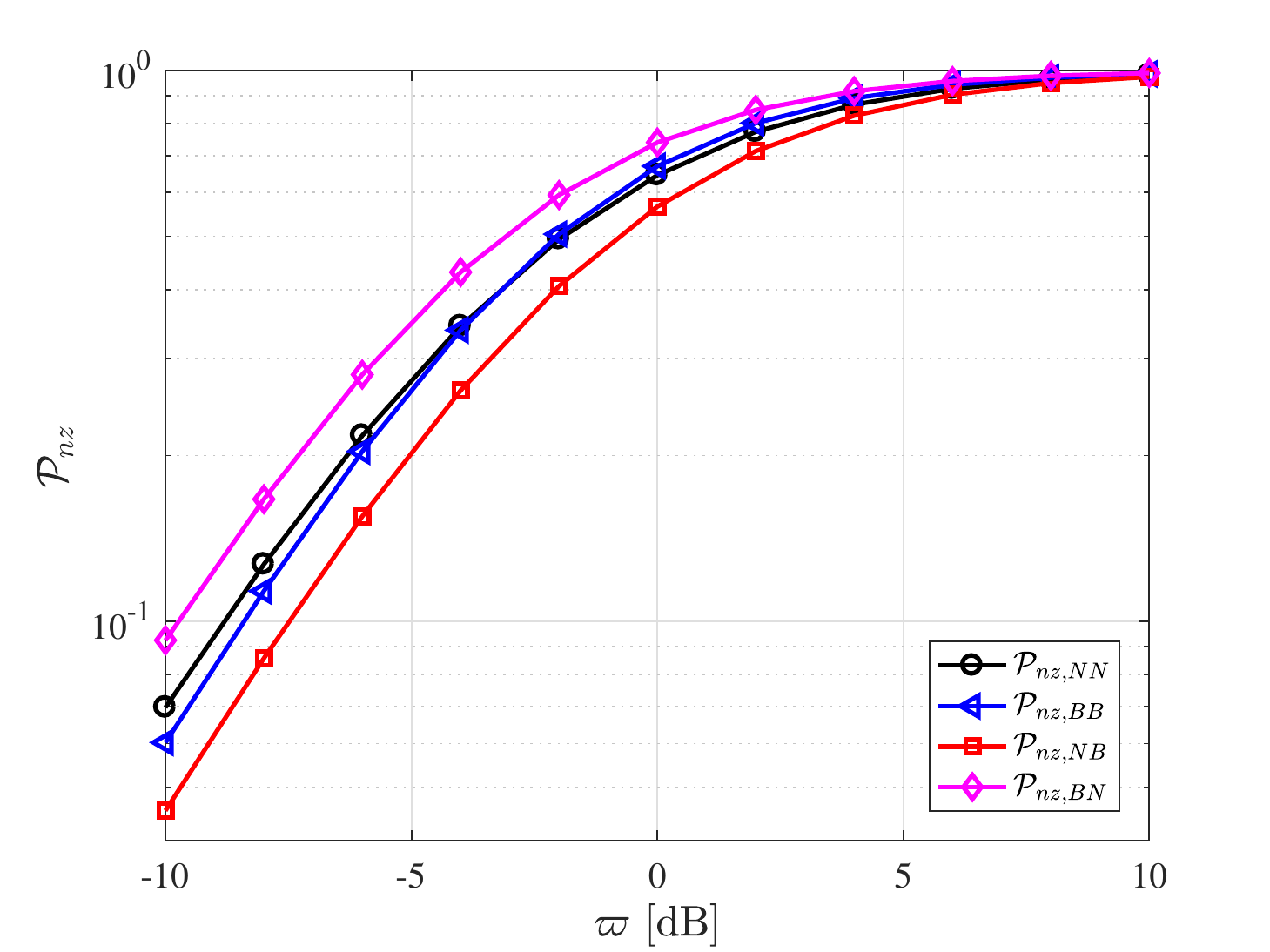}}
\caption{\small{$\mathcal{P}_{nz}$ versus $\varpi$ for the $1st$ nearest/best legitimate receiver for $\lambda_b = 0.2$, $\lambda_e =0.1$, $N_a = N_b = N_e = 2$, $\alpha_k = \alpha_e = 2$, $\mu_k = 2$, $\mu_e = 3$, $d = 3$ and $\upsilon = 2$.}}
\label{Pnz_com_EbNo}
\end{figure}

Fig. \ref{Pnz_com_KthUser} compares the PNZs given in (\ref{Pspc_nz1_nearest}), (\ref{PSPC_best_final}), (\ref{Pspc_nearest_best}) and (\ref{Pspc_best_nearest}) for the four scenarios, where the 1st nearest or best eavesdropper is considered. One can conceive that (i) our closed-form expressions are confirmed by the Monte-Carlo simulation outcomes; (ii) the $\mathcal{P}_{nz,BN}$ outperforms the other three scenarios when $k = 1, 2$, this trend is changing as $k$ reach 4, the probability of having a positive secrecy capacity drops in a descending order, namely, $\mathcal{P}_{nz,NN}> \mathcal{P}_{nz,NB}>\mathcal{P}_{nz,BN}>\mathcal{P}_{nz,BB}$. The reason behind lies in that two ordering key factors, i.e., distances and composite channel gain, are in turn playing a critical role on the secrecy performance especially as $k$ increases.
As shown in Fig. \ref{Pnz_com_PathLoss}, the influence of $\upsilon$ on the PNZ is demonstrated. As it can be readily observed, the PNZs tend to decrease as the $k$-th user index grows for all considered $\upsilon$.

Fig. \ref{Pnz_best_KthUSER} presents the maximum number of the $k$-th best users for a given probability constraint $\tau$. As illustrated in this figure, it can be easily seen that many more best users are permitted for higher $\varpi$ and higher $\lambda_b/\lambda_e$.  

As observed in Fig. \ref{Pnz_com_KthUser}, the $1$st nearest/best legitimate receiver is mostly endangered by the malicious eavesdropper. As a result, in the following three Figs, the impacts of $\varpi$, the receiving antenna numbers $N_b, N_e$, and the density of two kinds of receivers, $\lambda_b$ and $\lambda_e$ on the PNZ are investigated. In this case, the first nearest/best legitimate receiver is considered in Figs. \ref{Pnz_com_EbNo}-\ref{Fig_com_lambda}. 

In Fig. \ref{Pnz_com_EbNo}, the PNZs are anticipated to witness an increasing trend as $\varpi$ increases. It is intuitively observed that the 1st best user is guaranteed with a higher probability in the presence of the 1st nearest eavesdropper. Such a phenomenon repeats itself for the Figs \ref{Pnz_com_AnteNoBob} and \ref{Fig_com_lambda}.

To terminate the discussion, in Figs. \ref{Pnz_com_AnteNoBob} and \ref{Fig_com_lambda}, we present the PNZs against the number of receiving antennas and the densities, respectively.  
\begin{figure}[!t]
\centering{\includegraphics[width=3.3in]{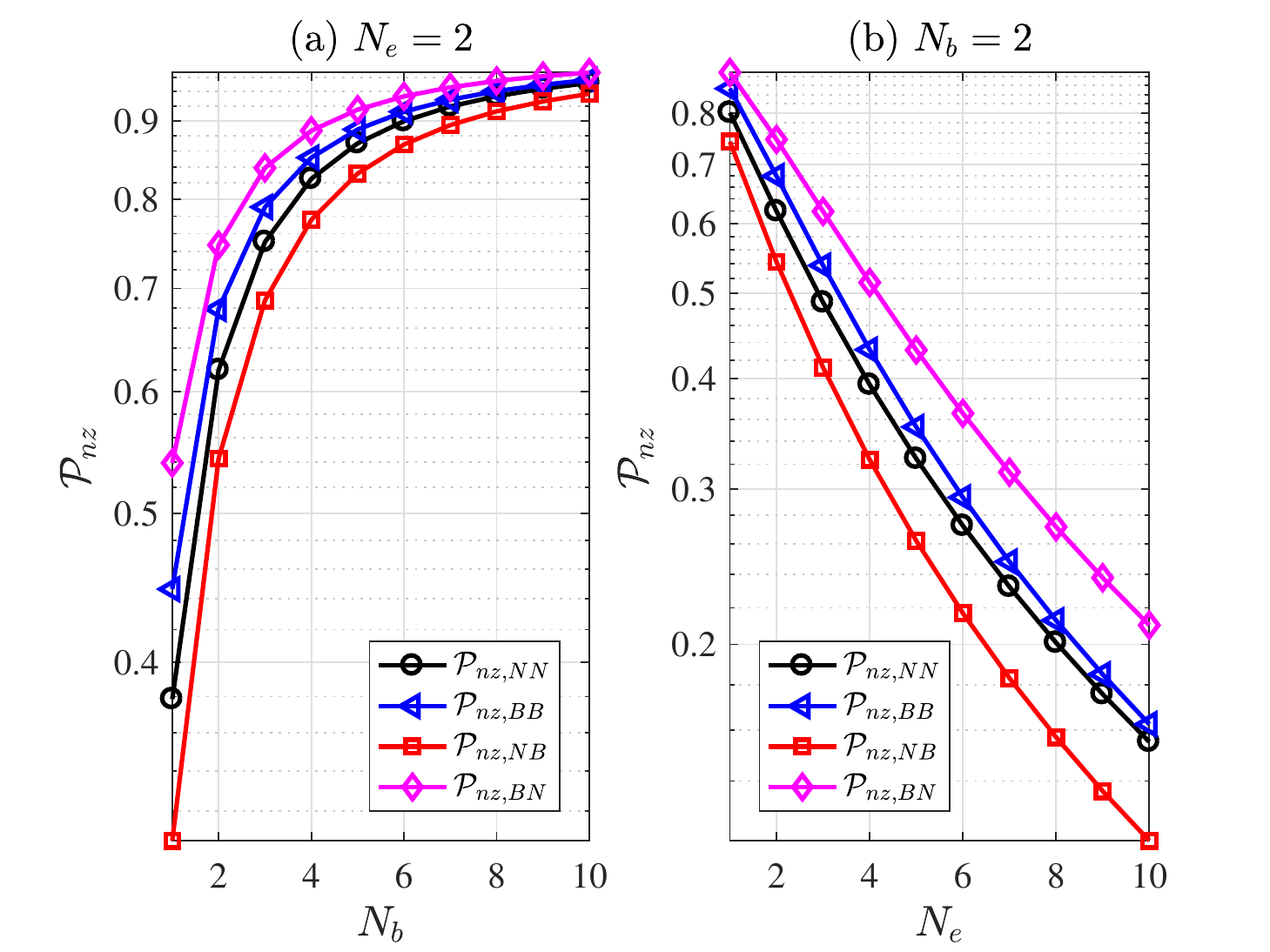}}
\caption{\small{$\mathcal{P}_{nz}$ versus the number of received antennas at the 1st nearest/best receivers for $\varpi = 10$ dB, $\lambda_b = 0.2$, $\lambda_e =0.1$, $\alpha_k = \alpha_e = 2$, $\mu_k = 1$, $\mu_e = 3$, $d = 3$, $N_a = 2$ and $\upsilon = 2$.}}
\label{Pnz_com_AnteNoBob}
\end{figure}
It is observed that an increased $N_b/N_e$ ratio indicates the legitimate receivers are much more capable to achieve a higher quality of receiving signals, which naturally yields a higher probability of positive secrecy capacity. It is validated by Fig. \ref{Pnz_com_AnteNoBob}(a). 

On the contrary, this trend is conversely preserved regardless of the $k$-th user index value. As $N_e/N_b$ increases, the $k$-th best legitimate receiver achieves the highest and second-highest probability of non-zero secrecy capacity (best performance), in the presence of the nearest/best eavesdropper, respectively, which are characterized by $\mathcal{P}_{nz,BN}$ and $\mathcal{P}_{nz,BB}$. Next, the 1st nearest legitimate receiver suffers more, resulting in a lower probability, as denoted by $\mathcal{P}_{nz,NN}$. Naturally, the worst performance is recoded when the system challenges against the best eavesdropper, described by $\mathcal{P}_{nz,NB}$.   

From the comparison of the PNZ against densities shown in Fig. \ref{Fig_com_lambda}, one can conclude that (i) conditioned on a given $\lambda_b$, the higher $\lambda_e$ indicates a system with relatively more eavesdroppers. An increase in the number of colluding eavesdroppers progressively endangers the legitimate link, i.e., probability becomes worse (lower) for higher number of eavesdroppers; (ii) for a fixed number of eavesdroppers, lower $\lambda_b$ values result in worse performance, i.e., lower probability of non-zero secrecy capacity.
\begin{figure}[!t]
\centering{\includegraphics[width=3.3in]{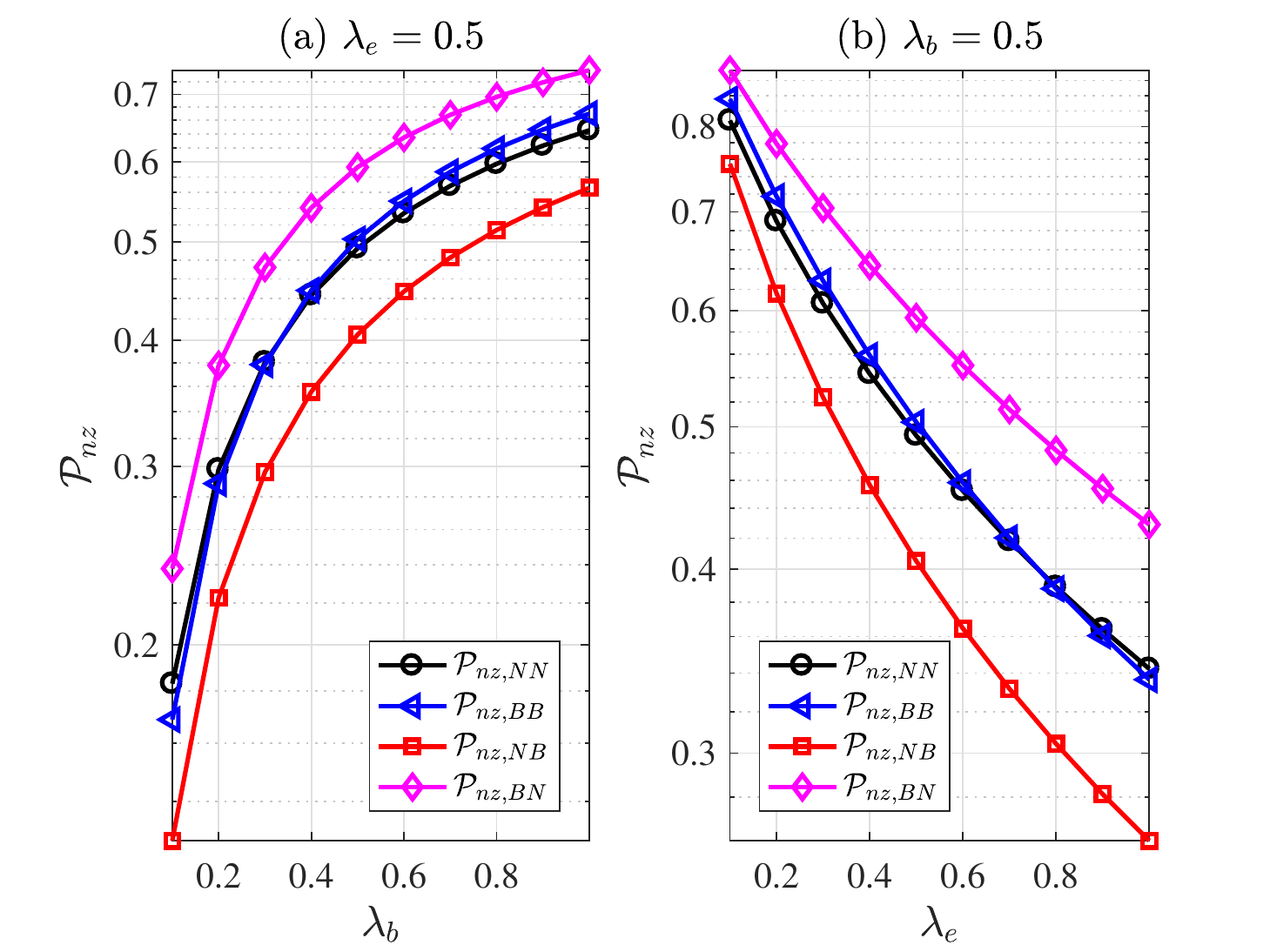}}
\caption{\small{$\mathcal{P}_{nz}$ versus the density of 1st nearest/best receivers for $\varpi = 10$ dB, $N_a = N_b = N_e = 2$, $\alpha_k = \alpha_e = 2$, $\mu_k = 2$, $\mu_e = 3$, $d = 3$ and $\upsilon = 2$.}}
\label{Fig_com_lambda}
\end{figure}
%========================================================================================================================================================
\subsection{Results pertaining to ergodic secrecy capacity}
Fig. \ref{ASC_com_kth} plots the ergodic secrecy capacity versus the $k$-th nearest or best legitimate receiver, while in the presence of the 1st nearest or best eavesdropper, respectively. Again, the same conclusion can be obtained: the ergodic secrecy capacity, as depicted in case 4, outperforms the other 3 cases. 
\begin{figure}[!t]
\centering{\includegraphics[width=3.3in]{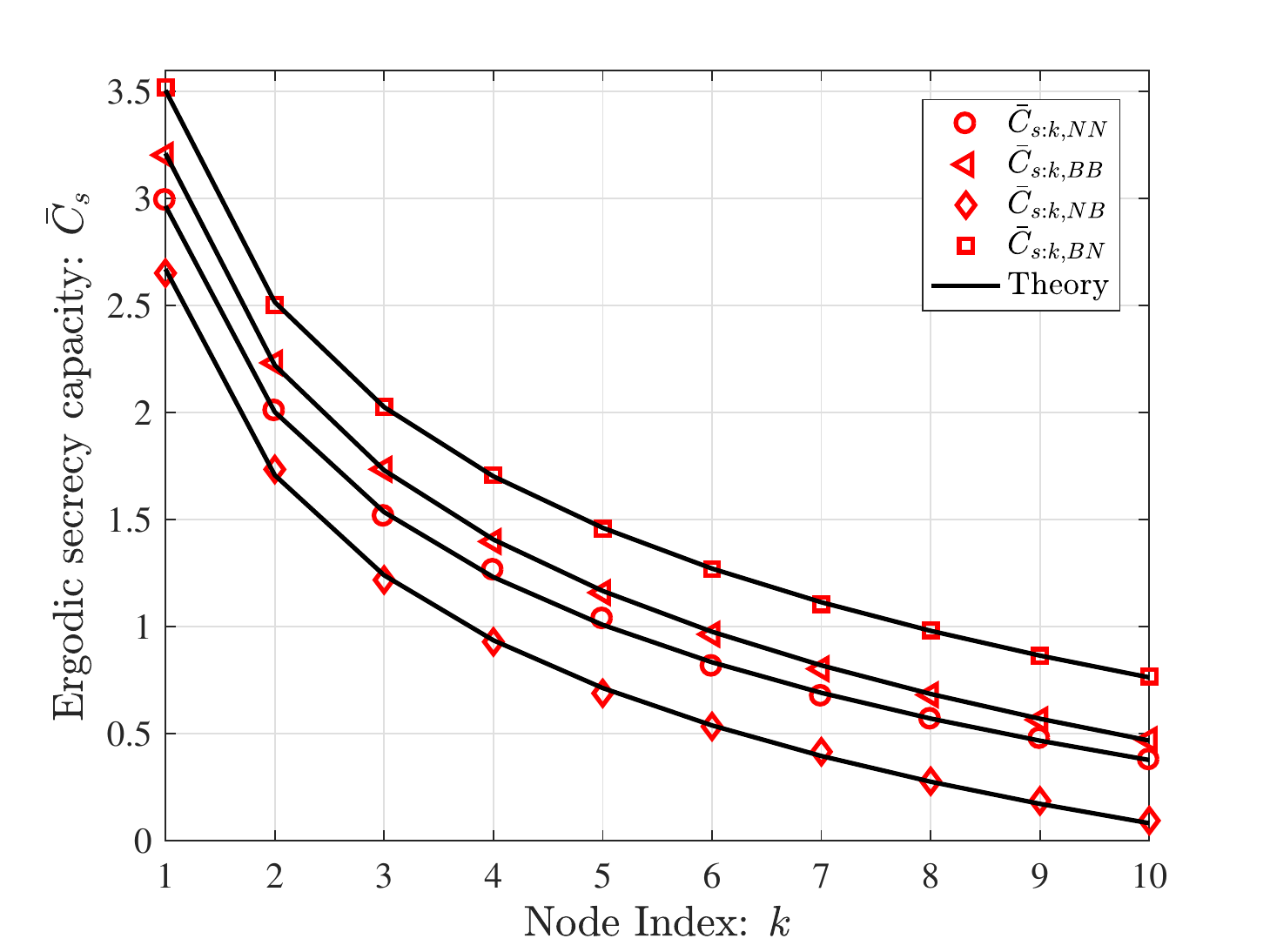}}
\caption{\small{$\bar{C}_{s}$ versus the $k$-th nearest/best legitimate receiver for $\lambda_b = \lambda_e = 1$, $N_a = N_b = N_e = 1$, $\alpha_k = \alpha_e = 2$, $\mu_k = \mu_e = 1$, $d = 2$ and $\upsilon = 2$, $\eta_k = 15$ dB, $\eta_e = 0$ dB.}}
\label{ASC_com_kth}
\end{figure}
\section{Conclusion} \label{Concluding}
In the context of this paper, we investigated the secrecy performance of HPPP-based random MIMO wireless networks over $\alpha$-$\mu$ fading channels for the first time. For the purpose of evaluating the secrecy performance of such a network, the COP, PNZ and ergodic secrecy capacity for the $k$-th nearest/best legitimate receiver in the presence of non-colluding eavesdroppers are derived and quantified with closed-form expressions. The accuracy of our analytical derivations are further successfully confirmed by simulation outcomes. Remarkable observations are drawn from the numerical results obtained in this paper. 

Indeed, the secrecy performance metrics are influenced by the density of users, the path-loss exponent, the number of transmitting and receiving antennas, as well as the fading parameters. In addition, the secrecy performance regarding the $k$-th best legitimate receiver outperforms that of the $k$-th nearest one. Hence, the nearest path does not necessarily provide the best secrecy performance. This paper's results and outcomes regarding parameters that influence secrecy performance will enable researchers or wireless system designers to quickly evaluate system performance and determine the optimal available parameter choices when facing different security risks. Finally, inspired from \cite{7143332}, future works will focus on using the beamforming deploying artificial noise technique over the homogeneous stochastic MIMO wireless network.
\appendices
\section{Derivation of $f_{\frac{g_k}{r_k^{\upsilon}}}(z)$} \label{Appen_A}
Setting $Z=\frac{g_k}{r_k^{\upsilon}}$, the PDF of $Z$ can be assessed by the ratio of $g_k$ and $r_k^{\upsilon}$, given by the following form
\begin{equation} \label{Appen_A_step1}
\begin{split}
f_{\frac{g_k}{r_k^{\upsilon}}}(z)& = \int_0^{\infty}yf_{g_k}(yz)f_{r_k^\upsilon}(y)dy\\
& \mathop=^{(b)}  \frac{{ \delta A_k^k \epsilon_k}}{\Gamma(k)} \int_0^\infty  y^{k\delta} \exp(-A_k y^{\delta}) \\
& \quad\times H_{0,1}^{1,0} \left[ { \theta_k z y \left| \hspace{-0.5ex}{\begin{array}{*{20}c}
    -  \\
   {(\mu_k - \frac{2}{\alpha _k }, \frac{2}{\alpha _k })}  \\
\end{array}} \right.} \hspace{-1ex}\right] dy,
\end{split}
\end{equation}
where $f_{r_k^\upsilon}(y)=\exp(-A_k y^{\delta})\frac{\delta(A_k y^\delta)^k}{y\Gamma(k)}$,  $A_k = \pi \lambda_b$ \cite[eq. (5)]{7136149}, $(b)$ is developed by substituting (\ref{PDFgi}).

Since the exponential function can be expressed in terms of Fox's $H$-function \cite[eq. (17)]{6754116}, given as 
\begin{equation} \label{Appen_A_step2}
\exp(  - A_k y^{\delta} )=\frac{1}{\delta } H_{0,1}^{1,0} \left[ { A_k^{\frac{1}{\delta}}y \left| {\begin{array}{*{20}c}
    -  \\
   {(0, \frac{1}{\delta})}  \\
\end{array}} \right.} \right],
\end{equation}
subsequently, substituting (\ref{Appen_A_step2}) into (\ref{Appen_A_step1}) and using the Mellin transform of the product of two Fox's $H$-function \cite[eq. (2.25.1.1)]{prudnikov1990integrals}, the proof is eventually concluded.
\vspace{-0.3cm}
\section{Derivation of $F_{\frac{g_k}{r_k^{\upsilon}}}(z)$} \label{Appen_B}
Essentially, $F_{\frac{g_k}{r_k^{\upsilon}}}(z)$ can be mathematically expressed as 
\begin{equation} \label{Pcoexpression}
\begin{split}
&F_{\frac{g_k}{r_k^{\upsilon}}}(z) = \int_0^{\infty}F_{g_k}(yz)f_{r_k^\upsilon}(y)dy \\
& =1 -  \frac{ \delta\epsilon_k A_k^k}{ \theta_k \Gamma(k)} \int_0^{\infty}y^{k\delta-1} \exp(-A_k y^\delta) \\
& \hspace{3ex}\times H_{1,2}^{2,0} \left[ { A_k^{\frac{1}{\delta}}y \left| {\begin{array}{*{20}c}
    {(1,1)}  \\
   {(0,1),(\mu_k, \frac{2}{\alpha _k })}  \\
\end{array}} \right.} \right]   dy ,
\end{split}
\end{equation}
by using (\ref{Appen_A_step2}) and with the aid of \cite[eq. (2.25.1.1)]{prudnikov1990integrals}, the proof is finally achieved.
\vspace{-0.3cm}
\section{Proof of Lemma 2} \label{Appen_C}
The intensity function of $\Psi = \{ r_k ^\upsilon\}$ can be derived from $\mathbb{E}\{ \Psi[0,x) \}=\lambda_b c_dx^\delta$ by utilizing the mapping theorem \cite[Corollary 2.a]{4675736}, i.e., 
$\lambda_\Psi = \lambda_b c_d \delta x^{\delta-1}$.

The intensity of $\Xi_k$ is obtained by applying the displacement theorem \cite{haenggi2012stochastic} as follows 
\begin{equation} \label{lambda_xi}
\begin{split}
&\lambda_{\Xi_k} = \int_0^\infty \lambda_\Psi \rho(x,y) dx  = \int_0^\infty \lambda_\Psi \frac{x}{y^2}f_{g_k}(x/y) dx \\
&= \int_0^\infty  \lambda_b c_d \delta\frac{x^\delta}{y^2}f_{g_k}(x/y) dx \\
&\mathop=^{(c)}  \lambda_b c_d \delta y^{\delta-1} \underbrace{ \int_0^\infty z^\delta f_{g_k}(z) dz}_{U_4}, 
\end{split}
\end{equation}
where $(c)$ is obtained by changing the variable $z = x/y$. The integral in (\ref{lambda_xi}) is then solved as
\begin{equation} \label{I_1}
\begin{split}
U_4& = \int_0^\infty \frac{{\alpha _k z^{\frac{{\alpha _k \mu _k }}{2} + \delta - 1} }}{{2\Omega _k^{\frac{{\alpha _k \mu _k }}{2}} \Gamma \left( {\mu _k } \right)}}\exp \Bigg( { - \left( {\frac{z}{{\Omega _k }}} \right)^{\frac{{\alpha _k }}{2}} } \Bigg)dz \\
&\mathop=^{(d)} \frac{\Gamma(\mu_k+\frac{2\delta}{\alpha_k})\Omega_k^\delta}{\Gamma(\mu_k)},
\end{split}
\end{equation}
where $(d)$ holds by using \cite[eq. (3.381.10)]{gradshteyn2014table}. The proof is eventually concluded by substituting (\ref{I_1}) into (\ref{lambda_xi}).
\section{Proof of Lemma 3} \label{Appen_D}
By using \cite[Lemma 2]{7742344}, we have 
\begin{equation}  \label{CDF_best_xi}
\begin{split}
&F_{\xi_k}(x) = \mathcal{P}r(\xi_k <x) = 1 - \mathcal{P}r(\Xi[0,x]<k)\\
& = 1 - \sum_{n=0}^{k-1}\exp\left( -\int_0^x \lambda_{\Xi_k}(y)dy\right)\frac{\left(\int_0^x \lambda_{\Xi_k}(y)dy\right)}{n!}\\
&= 1- \sum_{n=0}^{k-1}\exp(-A_{b1}x^\delta)\dfrac{(A_{b1}x^\delta)^n}{n!}  = \frac{\gamma\left( k, A_{b1} x^\delta\right)}{\Gamma(k)}.
\end{split}
\end{equation}
When taking the derivative of (\ref{CDF_best_xi}), all terms in the sum are canceled out but the one for $n-1$. The PDF of $\xi_k$ becomes 
\begin{equation} \label{PDF_best_xi}
f_{\xi_k}(x) = \exp\left(-A_{b1}x^{\delta}\right)\frac{\delta(A_{b1}x^{\delta})^k}{x\Gamma(k)}.
\end{equation}
Therefore, the composite channel gain for the $k$-th best user can be termed as 
\begin{equation}
\begin{split}
F_{\frac{1}{\xi_k}}&=\mathcal{P}r\left(\frac{1}{\xi_k} < z\right)= 1 - F_{\xi_k}\left(\frac{1}{z} \right) \\
&= 1 -  \frac{\gamma\left( k, A_{b1} z^{-\delta}\right)}{\Gamma(k)}  = \frac{\Gamma\left( k, A_{b1} z^{-\delta}\right)}{\Gamma(k)}.
\end{split}
\end{equation}
Herein, the last step is derived from \cite[eq. (8.356.3)]{gradshteyn2014table}. By taking the derivative of $F_{\frac{1}{\xi_k}}(z)$ in terms of $z$, the PDF of $\frac{1}{\xi_k}$ is directly obtained. 
\section{Derivation of $\mathcal{P}_{nz,NN}$ in (\ref{Pspc_nz1_nearest})} \label{Appen_E}
Inspired by Lemma \ref{Lemma1}, $\mathcal{P}_{nz,NN}$ can be essentially derived as follows
\begin{equation}
\begin{split}
&\hspace{-2ex}\mathcal{P}_{nz,NN} =  \int_0^\infty F_{\frac{ g_e}{r_e^\upsilon}}(\varpi y) f_{\frac{ g_k}{r_k^\upsilon}} \left( y \right) dy \\
& = 1 - \frac{\epsilon_k \epsilon_e} {\theta_e A_k^{\frac{1}{\delta}} \Gamma(k) } \int_0^\infty H_{1,1}^{1,1} \left[ { \frac{\theta_k}{ A_k^{\frac{1}{\delta}} } y \left| {\begin{array}{*{20}c}
    {(1 - k - \frac{1}{\delta}, \frac{1}{\delta})}  \\
   {(\mu_k -\frac{2}{\alpha_k} , \frac{2}{\alpha_k})}  \\
\end{array}} \right.} \hspace{-1.5ex} \right] \\
& \hspace{3ex} \times H_{2,2}^{2,1} \left[ { \frac{\theta_e}{A_e^{\frac{1}{\delta}}} \varpi y \left| {\begin{array}{*{20}c}
    {(0, \frac{1}{\delta}),(1,1)}  \\
   {(0,1),(\mu_e , \frac{2}{\alpha_e})}  \\
\end{array}} \right.} \right], 
\end{split}
\end{equation}  
with the help of \cite[eq. (2.25.1.1)]{prudnikov1990integrals}, the proof is accomplished.
\section{Derivation of $\mathcal{P}_{nz,NB}$ in (\ref{Pspc_nearest_best})} \label{Appen_F}
Thanks to the CDF of $\frac{ g_k}{r_k^\upsilon}$ and PDF of $\xi_e$, respectively given in (\ref{CDF_nearest}) and (\ref{PDF_best_xi}), the expression $\mathcal{P}_{nz,NB}$ can be easily stated as 
\begin{equation} \label{Step1_AppF}
\begin{split}
&\mathcal{P}_{nz,NB} = 1 - \int_0^\infty F_{\frac{ g_k}{r_k^\upsilon}} \left( \frac{1}{\varpi y} \right) f_{\xi_e}(y)dy \\
& = \frac{2 \delta A_{e1}}{\alpha_k \Gamma(\mu_k) \Gamma(k)} \int_0^\infty y^{\delta -1} \exp(-A_{e1}y^{\delta}) \\
&\hspace{3ex} \times H_{2,2}^{2,1} \left[ { \frac{\varpi_k}{\varpi A_k^{\frac{1}{\delta}} y}   \left| {\begin{array}{*{20}c}
   {(1-k,\frac{1}{\delta}),(1,1)}  \\
   {(0,1),(\mu_k, \frac{2}{\alpha_k})}   \\
\end{array}} \right.} \right] dy .
\end{split}
\end{equation}
By using (\ref{Appen_A_step2}) and with the assistance of the property of Fox's $H$-function \cite[eq. (8.3.2.7)]{prudnikov1990integrals},
\begin{equation} \label{Step2_AppF}
\begin{split}
 &H_{2,2}^{2,1} \left[ { \frac{\varpi_k}{\varpi A_k^{\frac{1}{\delta}} y}   \left| {\begin{array}{*{20}c}
   {(1-k,\frac{1}{\delta}),(1,1)}  \\
   {(0,1),(\mu_k, \frac{2}{\alpha_k})}   \\
\end{array}} \right.} \right] \\
&\hspace{5ex} = H_{2,2}^{1,2} \left[ { \frac{\varpi A_k^{\frac{1}{\delta}} y} {\varpi_k}  \left| {\begin{array}{*{20}c}  
   {(1,1),(1 - \mu_k, \frac{2}{\alpha_k})}   \\
   {(k,\frac{1}{\delta}),(0,1)}  \\
\end{array}} \right.} \right].
\end{split}
\end{equation}
 $\mathcal{P}_{nz,NB}$ can be further developed as 
\begin{equation} \label{Step3_AppF}
\begin{split}
&\hspace{-2ex}\mathcal{P}_{nz,NB} =\frac{2  A_{e1}}{\alpha_k \Gamma(\mu_k) \Gamma(k)} \int_0^\infty y^{\delta -1} H_{0,1}^{1,0} \left[ { A_{e1}^{\frac{1}{\delta}} y \left|\hspace{-1ex} {\begin{array}{*{20}c}
    {-}  \\
   {(0,\frac{1}{\delta})}  \\
\end{array}} \right.}\hspace{-1ex} \right]  \\
& \hspace{3ex}\times H_{2,2}^{1,2} \left[ { \varpi \Omega_k A_k^{\frac{1}{\delta}} y \left| {\begin{array}{*{20}c}
   {(1-k,\frac{1}{\delta}),(1,\frac{2}{\alpha_k})}  \\
   {(\mu_k, \frac{2}{\alpha_k}),(0,\frac{2}{\alpha_k})}   \\
\end{array}} \right.} \right] dy, 
\end{split}
\end{equation}
afterwards, performing the Mellin transform of the product of two Fox's $H$-functions \cite[eq. (2.25.1.1)]{prudnikov1990integrals}, the proof is eventually obtained.
\section{Derivation of $\mathcal{P}_{nz,BN}$ in (\ref{Pspc_best_nearest})} \label{Appen_G}
The $\mathcal{P}_{nz,BN}$ in (\ref{Pnz_BN}) can be tracked from the PDF of $\xi_k$ and the CDF of $\frac{g_e}{r_e^\upsilon}$, $\mathcal{P}_{nz,4}$ is thus given by
\begin{equation} \label{Step1_AppG}
\begin{split}
& \mathcal{P}_{nz,BN} = \int_0^\infty F_{\frac{g_e}{r_e^{\upsilon}}}\left( \frac{\varpi}{y}\right)f_{\xi_k}(y) dy\\
& = 1 - \frac{\epsilon_e \delta A_{b1}^{\frac{1}{\delta}} }{\theta_e \Gamma(k)}  \int_0^\infty y^{k\delta -1} \exp(-A_{b1} y^\delta) \\
& \hspace{3ex}\times H_{2,2}^{2,1} \left[ {   \frac{\theta_e \varpi }{A_e^{\frac{1}{\delta}} y } \left| {\begin{array}{*{20}c}
   {(0,\frac{1}{\delta}),(1,1)}  \\
   {(0,1),(\mu_e, \frac{2}{\alpha_e})}   \\
\end{array}} \right.} \right] dy.
\end{split}
\end{equation}
Subsequently, following the similar steps as (\ref{Step1_AppF}-\ref{Step3_AppF}), the proof is easily proved.
\section{Derivation of Proposition (\ref{Propo_7})} \label{ErgSecrecyCapacity_Proof}
As the very beginning, the logarithm function and exponential function can be alternatively rewritten in terms of the Fox's $H$-function \cite[eq. (1.7.2)]{mathai1978h} and \cite[eq. (8.4.6.5)]{prudnikov1990integrals}
\begin{equation}
\log_2(1+x) = \frac{1}{\ln 2} H_{2,2}^{1,2} \left[ { x \left|\hspace{-1ex} {\begin{array}{*{20}c}
    {(1,1),(1,1)}  \\
   {(1,1),(0,1)}  \\
\end{array}} \right.}\hspace{-1ex} \right],
\end{equation}
\begin{equation}
\exp(-x) =  H_{0,1}^{1,0} \left[ { x \left|\hspace{-1ex} {\begin{array}{*{20}c}
    {-}  \\
   {(0,1)}  \\
\end{array}} \right.}\hspace{-1ex} \right].
\end{equation}

\begin{equation}
\begin{split}
R_{N,k}^M & = \mathbb{E}_{\frac{ g_k}{r_k^{\upsilon}}}\left[ \log_2\left(1 + \frac{\eta_k g_k}{r_k^{\upsilon}} \right)\right] \\
& =\frac{\epsilon_k}{A_k^{\frac{1}{\delta}} \Gamma(k) \ln 2 } \int_0^\infty H_{2,2}^{1,2} \left[ { \eta_k y \left|\hspace{-1ex} {\begin{array}{*{20}c}
    {(1,1),(1,1)}  \\
   {(1,1),(0,1)}  \\
\end{array}} \right.}\hspace{-1ex} \right] \\
&\quad\quad\times H_{1,1}^{1,1} \left[ { \frac{\theta_k y}{ A_k^{\frac{1}{\delta}}}  \left| {\begin{array}{*{20}c}
    {(1 - k - \frac{1}{\delta},\frac{1}{\delta})}  \\
   {(\mu_k - \frac{2}{\alpha_k }, \frac{2}{\alpha_k })}  \\
\end{array}} \right.} \right] dy.
\end{split}
\end{equation}
Next, applying the Mellin transform of the product of two Fox's $H$-function \cite[eq. (2.25.1.1)]{prudnikov1990integrals}, the proof is accomplished.

Using \cite[eq. (1.2.4)]{mathai1978h}, the PDF of $\xi_k$ in (\ref{PDF_best_xi}) can be re-expressed in terms of Fox's $H$-function, 
\begin{equation} \label{PDF_xi_Fox}
f_{\xi_k}(x) = \frac{\delta}{x\Gamma(k)} H_{0,1}^{1,0} \left[ { A_{b1} x^\delta  \left| {\begin{array}{*{20}c}
    {-}  \\
   {(k,1)}  \\
\end{array}} \right.} \right],
\end{equation}
subsequently, using \cite[eq. (1.2.2)]{mathai1978h} of $\log_2(1+\frac{1}{x})$ and plugging (\ref{PDF_xi_Fox}), yields
\begin{equation}
\begin{split}
R_{B,k}^M & = \mathbb{E}_{\xi_k}\left[ \log_2\left(1 + \frac{\eta_k }{\xi_k} \right)\right] \\
& =\frac{\delta}{ \Gamma(k) \ln 2 } \int_0^\infty y^{-1} H_{2,2}^{1,2} \left[ { \frac{y}{\eta_k }  \left|\hspace{-1ex} {\begin{array}{*{20}c}
    {(1,1),(1,1)}  \\
   {(1,1),(0,1)}  \\
\end{array}} \right.}\hspace{-1ex} \right] \\
&\quad \quad\times H_{0,1}^{1,0} \left[ { A_{b1} y^\delta  \left| {\begin{array}{*{20}c}
    {-}  \\
   {(k,1)}  \\
\end{array}} \right.} \right] dy,
\end{split}
\end{equation}
next, using \cite[eq. (2.25.1.1)]{prudnikov1990integrals} and \cite[eq. (1.7.1)]{mathai1978h}, the proof is achieved.
\section*{Acknowledgment}
This work has been supported by the ÉTS research chair of physical layer security in wireless networks. The authors would like to thank Prof. Daniel B. da Costa for his helpful suggestions.

\ifCLASSOPTIONcaptionsoff
  \newpage
\fi

\bibliographystyle{IEEEtran}
\bibliography{AlphaMu}

\end{document}